\newtheorem{Lemma}{Lemma}
\newtheorem{Theorem}{THEOREM}
\newtheorem{proposition}{Proposition}
\newtheorem{assumption}{Assumption}
\theoremstyle{definition}
\theoremstyle{definition}
\theoremstyle{definition}
\newcommand\calC{{\mathcal{C}}}
\newcommand\id{\mathbb{I}}
\newcommand\nn\nonumber
\numberwithin{equation}{section}
\newcommand{\R}{\mathbb{R}}
\newcommand{\Z}{\mathbb{Z}}
\newcommand{\C}{\mathbb{C}}
\newcommand{\F}{\mathcal{F}}
\newcommand{\E}{\mathcal{E}}
\renewcommand{\H}{\mathcal{H}}
\newcommand{\be}{\begin{equation}}
\newcommand{\ee}{\end{equation}}
\newcommand{\bea}{\begin{align}}
\newcommand{\eea}{\end{align}}
\DeclareMathOperator{\tr}{tr}
\DeclareMathOperator{\Tr}{Tr}
\DeclareMathOperator{\const}{const}
\DeclareMathOperator{\im}{Im}
\DeclareMathOperator{\re}{Re}
\begin{document}

\title[Derivation of GL theory for a 1D system]{Derivation of Ginzburg-Landau 
theory\\ for a one-dimensional system with contact interaction}

\author[R.L. Frank]{Rupert L. Frank}
\address{{\rm (Rupert L. Frank)} Department of Mathematics, Princeton University, Princeton, NJ 08544, USA}
\email{rlfrank@math.princeton.edu}

\author[C. Hainzl]{Christian Hainzl}
\address{{\rm (Christian Hainzl)} Mathematisches Institut, Universit\"at T\"ubingen, Auf der Morgenstelle 10, 
72076 T\"ubingen, Germany}
\email{christian.hainzl@uni-tuebingen.de}

\author[R. Seiringer]{Robert Seiringer}
\address{{\rm (Robert Seiringer)} Department of Mathematics and Statistics, McGill University, 805 Sherbrooke Street West, Montreal, QC H3A 2K6, Canada}
\email{rseiring@math.mcgill.ca}

\author[J.P. Solovej]{Jan Philip Solovej}
\address{{\rm (Jan Philip Solovej)} Department of Mathematics, University of Copenhagen, Universitetsparken 5, DK-2100 Copenhagen, Denmark}
\email{solovej@math.ku.dk}

\begin{abstract}
In a recent paper \cite{FHSS} we give the first rigorous derivation of
the celebrated Ginzburg-Landau (GL) theory, starting from the microscopic  
Bardeen-Cooper-Schrieffer (BCS) model. 
Here we present our results in the simplified case of a one-dimensional
system of particles interacting via a $\delta$-potential.
\end{abstract}

\date {March 9, 2011} 
\maketitle

\section{Introduction and Main Results}

\subsection{Introduction}

In 1950 Ginzburg and Landau \cite{GL} presented the first satisfactory
mathematical description of the phenomenon of superconductivity.
Their model examined the {\em macroscopic} properties of a
superconductor in a phenomenological way, without explaining its
microscopic mechanism. In the GL theory the superconducting state is
represented by a complex order parameter $\psi(x)$, which is zero in
the normal state and non-zero in the superconducting state. The order
parameter $\psi(x)$ can be considered as a macroscopic wave-function
whose square $|\psi(x)|^2$ is proportional to the density of
superconducting particles.

In 1957 Bardeen, Cooper and Schrieffer \cite{BCS} formulated the first
{\em microscopic} explanation of superconductivity starting from a
first principle Hamiltonian. In a major breakthrough they realized
that this phenomenon can be described by the {\em
  pairing-mechanism}. The superconducting state forms due to an
instability of the normal state in the presence of an attraction
between the particles. In the case of a metal the attraction is made
possible by an interaction through the lattice. For other systems,
like superfluid cold gases, the interaction is of local type. In the
BCS theory the superconducting state, which is made up by pairs of
particles of opposite spin, the {\em Cooper-pairs}, is described by a
two-particle wave-function $\alpha(x,y)$.
 
A connection between the two approaches, the phenomenological GL
theory and the microscopic BCS theory, was made by Gorkov
\cite{gorkov} who showed that, close to the
critical temperature, the order parameter $\psi(x)$ and the
pair-wavefunction $\alpha(x,y)$ are proportional. 
A simpler argument was later given by de Gennes \cite{deGenne}.

Recently we presented in \cite{FHSS} a mathematical proof of the
equivalence of the two models, GL and BCS, in the limit when the
temperature $T$ is close to the critical temperature $T_c$, i.e., when
$ h = \left[(T_c -T)/T_c\right]^{1/2} \ll 1,$ where $T_c$ is the
critical temperature for the translation-invariant BCS equation. The
mathematical aspects of this equation where studied in detail in
\cite{HHSS,FHNS,HS,HS2,HS3}. In the present paper we present this
result in the simplified case a of one-dimensional system where the
particles interact via an attractive contact interaction potential of the form
\begin{equation}\label{deltav}
V(x-y) = - a \delta(x-y) \quad\text{with}\ a>0 \,.
\end{equation}

We assume that the system is subject to a weak external potential $W$,
which varies on  a large scale $1/h$ compared to the microscopic
scale of order $1$.  Since variations of the system on the macroscopic
scale cause a change in energy of the order $h^2$, we assume that the
external potential $W$ is also of the order $h^2$.  Hence we write it
as $h^2 W(hx)$, with $x$ being the microscopic variable. 
The parameter $h$ will play the role of a semiclassical parameter.

We will prove that, to leading order in $h$, the Cooper pair wave
function $\alpha(x,y)$ and the GL function $\psi(x)$ are related by
\begin{equation}\label{alphconverg}
\alpha(x,y) = \psi\left( h \frac{x+y}{2}\right) \alpha^0(x-y) 
\end{equation}
where $\alpha^0$ is the translation invariant minimizer of the BCS
functional.  In particular, the argument $\bar x$ of the order
parameter $\psi(\bar x)$ describes the {\em center-of-mass} motion of
the BCS state, which varies on the macroscopic scale. To be precise,
we shall prove that $ \alpha(x,y) = \tfrac 12 (\psi(hx) + \psi(hy))
\alpha^0(x-y) $ to leading order in $h$, which agrees with
\eqref{alphconverg} to this order.

For simplicity we restrict our attention to contact potentials of the
form (\ref{deltav}), but our method can be generalized to other kinds
of interactions; see \cite{FHSS} for details. The proof presented here
is simpler than the general proof in \cite{FHSS} which applies to any
dimension $d\leq 3$. There are several reasons for this.  First, there
is no magnetic field in one dimension. Second, for a contact
interaction the translation invariant problem is particularly simple
and the corresponding gap equation has an explicit solution. Finally,
several estimates are simpler in one dimension due the boundedness of
the Green's function for the Laplacian.


\subsection{The BCS Functional}

We consider a macroscopic sample of a fermionic system, in one spatial
dimension. Let $\mu\in\R$ denote the chemical potential and $T>0$ the
temperature of the sample. The fermions interact through the
attractive two-body potential given in (\ref{deltav}).  In addition,
they are subject to an external force, represented by a potential
$W(x)$.

In BCS theory the state of the system can be conveniently described in terms of a $2\times 2$ operator valued matrix
$$
\Gamma = \left( \begin{array}{cc} \gamma & \alpha \\ \bar\alpha & 1 -\bar\gamma \end{array}\right) 
$$
satisfying $0\leq \Gamma \leq 1$ as an operator on $L^2(\R)\oplus L^2(\R)$. The
bar denotes complex conjugation, i.e., $\bar \alpha$ has the integral kernel
$\overline {\alpha(x,y)}$. In particular, $\Gamma$ is assumed to be hermitian,
which implies that $\gamma$ is hermitian and $\alpha$ is symmetric (i.e,
$\gamma(x,y)=\overline{\gamma(y,x)}$ and $\alpha(x,y)=\alpha(y,x)$.) There are
no spin variables in $\Gamma$. The full, spin dependent Cooper pair wave
function is the product of $\alpha$ with an antisymmetric spin singlet.

We are interested in the effect of weak and slowly varying external
fields, described by a potential $h^2 W(hx)$.  In order to avoid
having to introduce boundary conditions, we assume that the system is
infinite and periodic with period $h^{-1}$. In particular, $W$ should
be periodic. We also assume that the state $\Gamma$ is periodic. The
aim then is to calculate the free energy per unit volume.

We find it convenient to do a rescaling and use macroscopic variables
instead of the microscopic ones. In macroscopic variables, the BCS
functional has the form
\begin{align}
\F^{\rm BCS}(\Gamma) := \Tr \left( - h^2 \nabla^2 -\mu + h^2 W(x)\right) \gamma - T\, S(\Gamma) - ah\int_{\calC}  |\alpha(x,x)|^2 \, dx \label{def:bcs}
\end{align}
where $\calC$ denotes the unit interval $[0,1]$. The entropy equals
$S(\Gamma) = - \Tr \Gamma \ln \Gamma$. The BCS state of the system is
a minimizer of this functional over all admissible $\Gamma$.

The symbol $\Tr$ in (\ref{def:bcs}) stands for the trace per unit
volume. More precisely, if $B$ is a periodic operator (meaning that it
commutes with translation by $1$), then $\Tr B$ equals, by definition,
the (usual) trace of $\chi B$, with $\chi$ the characteristic function
of $\calC$. The location of the interval is obviously of no
importance. It is not difficult to see that the trace per unit volume
has the usual properties like cyclicity, and standard inequalities
like H\"older's inequality hold. This is discussed in more detail in
\cite{FHSS}.

\begin{assumption}\label{ass:1}
We assume that $W$ is a bounded, periodic function with period $1$ and $\int_\calC
W(x) \,dx = 0$.
\end{assumption}


\subsubsection{The Translation-Invariant Case}\label{sec:translinv}

In the translation invariant case $W=0$ one can restrict $\F^{\rm BCS}$ to translation invariant states. We write a general translation invariant state in form of the $2 \times 2$ matrix 
\begin{equation}\label{gt}
\Gamma = \left( \begin{array}{cc} \tilde\gamma(-ih\nabla) & \tilde\alpha(-ih\nabla) \\ \overline{\tilde\alpha(-ih\nabla)} & 1 -\overline{\tilde\gamma(-ih\nabla)} \end{array}\right) \,,
\end{equation}
that is, $\gamma=[\Gamma]_{11}$ and $\alpha=[\Gamma]_{12}$ have integral kernels
$$
\gamma(x,y)= \frac{1}{2\pi} \int_{\R} \tilde\gamma(hp) e^{ip(x-y)} \,dp
\quad\text{and}\quad
\alpha(x,y)= \frac{1}{2\pi} \int_{\R} \tilde\alpha(hp) e^{ip(x-y)} \,dp \,.
$$
The fact that $\Gamma$ is admissible
means that $\tilde\alpha(p)=\overline{\tilde\alpha(-p)}$, that
$0\leq\tilde\gamma(p)\leq 1$ and $|\tilde\alpha(p)|^2\leq
\tilde\gamma(p)(1-\tilde\gamma(-p))$ for any $p\in\R$. For states of this form
the BCS functional becomes
\begin{equation}\label{eq:translinv}
\F^{\rm BCS}(\Gamma) = \int_\R (h^2 p^2 - \mu) \tilde \gamma(hp)\,\frac{dp}{2\pi} - T \int_\R S(\tilde \Gamma(hp)) \frac{dp}{2\pi} - ah \left| \int_\R \tilde \alpha(hp) \frac{dp}{2\pi} \right|^2\,,
\end{equation}
with $S(\tilde\Gamma(p)) = -\Tr_{\C^2} \tilde\Gamma(p) \ln
\tilde\Gamma(p)$ and $\tilde\Gamma(p)$ the $2\times 2$ matrix obtained
by replacing $-i\nabla$ by $p$ in (\ref{gt}).

In the following, we are going to summarize some well-known facts
about the translation invariant functional \eqref{eq:translinv}. For given $a>0$,  we
define the critical temperature $T_c>0$ by the equation
\begin{equation}\label{def:tc}
\frac 1a = \int_\R \frac {\tanh\big(\frac{p^2 - \mu}{2T_c}\big)}{p^2 - \mu} \frac{dp}{2\pi} \,.
\end{equation}
The fact that there is a unique solution to this equation follows from
the strict monotonicity of $t/\tanh t$ for $t>0$.   If $T\geq T_c$, then the minimizer of \eqref{eq:translinv}
satisfies $\tilde\alpha\equiv 0$ and
$\tilde\gamma(hp)=(1+\exp((h^2p^2-\mu)/T))^{-1}$. If $0<T<T_c$, on the other hand, then there is a unique solution $\Delta_0>0$ of the {\em BCS gap equation} 
\begin{equation}\label{BCSgapequ} 
\frac 1a 
= \int_\R \frac 1{K_T^0(p)} \frac{dp}{2\pi} \,,
\end{equation}
where 
\begin{equation}\label{eq:kt0}
K_T^0(p) = \frac{\sqrt{(p^2 - \mu)^2
+\Delta_0^2}}{\tanh\left(\tfrac1{2T}\,\sqrt{(p^2 - \mu)^2
+\Delta_0^2}\right)} \,.
\end{equation}
Moreover, the minimizer of \eqref{eq:translinv} is given by
\begin{equation}\label{eq:translinvmin}
\tilde\Gamma^0(hp) = \left( 1 + \exp\left(\tfrac1T H_{\Delta_0}^0(hp)\right)
\right)^{-1}
\end{equation}
with
$$
H_{\Delta_0}^0(p)= \left( \begin{array}{cc} p^2 - \mu & -\Delta_0 \\ -\Delta_0& - p^2 + \mu\end{array}\right) \,.
$$
Writing $\tilde\alpha^0(hp)=[\tilde\Gamma^0(hp)]_{12}$ one easily deduces from \eqref{eq:translinvmin} that
\begin{equation}\label{eq:alpha0expl}
\tilde\alpha^0(p) = \frac{\Delta_0}{2K_T^0(p)} \,.
\end{equation}

To summarize, in the case $W\equiv 0$ the functional $\F^{\rm BCS}$
has a minimizer $\Gamma^0$ for $0<T<T_c$ whose off-diagonal element
does not vanish and has the integral kernel
\begin{equation}\label{def:a0}
\alpha^0((x-y)/h) = \frac{\Delta_0}{2} \int_{\R} \frac{1}{K_T^0(hp)} e^{ip(x-y)}
\,\frac{dp}{2\pi} \,.
\end{equation}
We emphasize that the function $\alpha^0$ depends on $T$. For $T$
close to $T_c$, which is the case of interest, we have $\Delta_0\sim
\const (1-T/T_c)^{1/2}$.


\subsection{The GL Functional}

Let $\psi$ be a periodic function in $H_{\rm loc}^1(\R)$. For numbers
$b_1,b_3>0$ and $b_2\in\R$ the Ginzburg-Landau (GL) functional is given by
\begin{equation}\label{GLfunct}
\E(\psi) = \int_{\calC} \left(b_1 |\psi'(x)|^2 + b_2 W(x) |\psi(x)|^2 + b_3 \left(1 - |\psi(x)|^2\right)^2\right) dx \,.
\end{equation}
We denote its ground state energy by 
$$
E^{\rm GL} =  \inf \{\E(\psi)\,| \, \psi \in H^1_{\rm per} \,\} \,.
$$
Under our assumptions on $W$ it is not difficult to show that there is a
corresponding minimizer, which satisfies a second order differential equation
known as the GL equation. 


\subsection{Main Results}

Recall the definition of the BCS functional $\F^{\rm BCS}$ in (\ref{def:bcs}). 
We define the energy $F^{\rm BCS}(T,\mu)$ as the difference between the  infimum
of $\F^{\rm BCS}$ over all admissible $\Gamma$ and the free energy of the
normal state
\begin{equation}\label{gamma0}
\Gamma_0 := \left( \begin{array}{cc} \gamma_0 & 0 \\ 0 & 1 -\bar\gamma_0 \end{array}\right) 
\end{equation}
with $\gamma_0 = (1+e^{(-h^2\nabla^2 + h^2 W(x) - \mu)/T})^{-1}$. That is,
\begin{equation}\label{bcsen}
F^{\rm BCS}(T,\mu) = \inf_{\Gamma} \F^{\rm BCS}(\Gamma) - \F^{\rm BCS}(\Gamma_0) \,.
\end{equation}
Note that
\begin{equation}\label{f0}
\F^{\rm BCS}(\Gamma_0) = - T\, \Tr \ln\left( 1+ \exp\left(-\left( -h^2 \nabla^2 -\mu + h^2 W(x)  \right) \right) /T\right) \,.
\end{equation}
For small $h$ this behaves like an (explicit) constant times
$h^{-1}$. Under further regularity assumptions on $W$, (\ref{f0}) can
be expanded in powers of $h$. We do not need this, however, since we
are only interested in the difference $F^{\rm BCS}(T,\mu)$.

Since $\Gamma_0$ is an admissible state, one always has $F^{\rm BCS}(T,\mu)\leq
0$. If the strict inequality $F^{\rm BCS}(T,\mu) < 0$ holds, then the system is
said to be in a superconducting (or superfluid, depending on the physical
interpretation) state.

\begin{Theorem}\label{thm:main}
  Let Assumption \ref{ass:1} be satisfied, and let $T_c>0$ be the
  critical temperature in the translation invariant case, defined in
  (\ref{def:tc}). Let $D>0$. Then there are coefficients $b_1$, $b_2$
  and $b_3$, given explicitly in \eqref{eq:coeff1}--\eqref{eq:coeff3}
  below, such that
\begin{equation}\label{enthm}
  F^{\rm BCS}(T_c(1-Dh^2),\mu) =  h^{3} \left(E^{\rm GL} - b_3 \right)+ o(h^{3})
\end{equation}
as $h\to 0$. More precisely, the error term $o(h^3)$  satisfies
$$
-\const h^{3 + \tfrac 13} \leq o(h^{3}) \leq \const h^{5} \,.
$$

 Moreover, if $\Gamma$ is an approximate minimizer of $\F^{\rm BCS}$
  at $T=T_c(1-h^2 D)$, in the sense that
  \begin{equation}
    \F^{\rm BCS}(\Gamma)\leq \F^{\rm BCS}(\Gamma_0) + h^{3} \left( E^{\rm GL} - 
b_3 + \epsilon\right)
  \end{equation}
  for some small $\epsilon > 0$, then the corresponding $\alpha$ can
  be decomposed as
  \begin{equation}\label{thm:dec}
    \alpha(x,y)  = \frac 1 2 \big( \psi( x) + \psi(y) \big) \alpha^0( h^{-1}(x-y) )   + \sigma(x,y) 
  \end{equation}
  with $\E^{\rm GL}(\psi) \leq E^{\rm GL} + \epsilon + \const h^{\tfrac 13} $, $\alpha^0$ defined in (\ref{def:a0}), 
   and
  \begin{equation}\label{sigmab}
    \int_{\calC\times \R} |\sigma(x,y)|^2 \, dx\, dy \leq \const h^{3+\tfrac 13}\,.
\end{equation}

\end{Theorem}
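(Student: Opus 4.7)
The plan is to establish Theorem~\ref{thm:main} through matching upper and lower bounds on $F^{\rm BCS}(T_c(1-Dh^2),\mu)$, with the lower bound additionally yielding the decomposition \eqref{thm:dec} for approximate minimisers. Throughout I take $T = T_c(1-Dh^2)$ and exploit the fact that, as $T\to T_c^-$, the relevant linearised BCS operator has a near-zero mode proportional to $\alpha^0$ in the relative coordinate, so that admissible states close to the normal state $\Gamma_0$ are parametrised, to leading order, by a single macroscopic function $\psi$.

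For the \emph{upper bound}, I construct a trial state parametrised by a near-minimiser $\psi$ of $\E$. The natural choice is the Gibbs state
$$ \Gamma_{\rm tr} = \bigl(1+\exp(H_\Delta/T)\bigr)^{-1}, \qquad H_\Delta = \left(\begin{array}{cc} -h^2\nabla^2 - \mu + h^2 W & \Delta \\ \bar\Delta & h^2\nabla^2 + \mu - h^2 W \end{array}\right), $$
with off-diagonal entry $\Delta$ whose kernel has the form $\tfrac{h}{2}(\psi(x)+\psi(y))\,t(x-y)$ for a short-range profile $t$ adapted to $\alpha^0$. A semiclassical expansion of $\F^{\rm BCS}(\Gamma_{\rm tr}) - \F^{\rm BCS}(\Gamma_0)$, exploiting the scale separation between $\psi(\cdot)$ (which varies on scale $1/h$) and $\alpha^0$ (which decays on scale $1$), extracts a gradient term $b_1|\psi'|^2$ from the kinetic/entropy contribution, the coupling $b_2 W|\psi|^2$ from the potential, and a quartic contribution $b_3(1-|\psi|^2)^2$ arising jointly from the entropy expansion and the contact term $-ah\int_\calC|\alpha(x,x)|^2\,dx$. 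Fixing $b_1,b_2,b_3$ accordingly yields $\F^{\rm BCS}(\Gamma_{\rm tr}) - \F^{\rm BCS}(\Gamma_0) \leq h^3(\E(\psi)-b_3) + O(h^5)$.

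For the \emph{lower bound}, I start with an approximate minimiser $\Gamma$ and invoke a Peierls-type relative entropy inequality comparing $\Gamma$ to $\Gamma_0$, along the lines developed in \cite{FHSS}, of schematic form
$$ \F^{\rm BCS}(\Gamma) - \F^{\rm BCS}(\Gamma_0) \geq \tfrac{T}{2}\,\Tr\bigl[\bar\alpha\, K_T^W \alpha\bigr] - ah\int_\calC|\alpha(x,x)|^2\,dx + \mathcal R(\Gamma), $$
where $K_T^W$ is the linearisation on Cooper-pair kernels (agreeing with $K_T^0$ up to $h^2 W$-corrections) and $\mathcal R$ is higher order in $\alpha$. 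By \eqref{BCSgapequ}, the operator $\tfrac12 K_{T_c}^0 - ah\delta_0$ in the relative variable has $\alpha^0$ as a simple zero mode, with a spectral gap above. I decompose $\alpha(x,y) = \tfrac12(\psi(x)+\psi(y))\alpha^0(h^{-1}(x-y)) + \sigma(x,y)$, fixing $\psi$ by projecting $\sigma$ off the $\alpha^0$-mode fibrewise in the centre-of-mass variable. Coercivity on the orthogonal complement converts the quadratic form into $h^3\int_\calC b_1|\psi'|^2$ plus a positive contribution controlling $\|\sigma\|_2^2 = O(h^{3+1/3})$, while expansion of $\mathcal R$ together with the contact term reproduces $b_2 W|\psi|^2 + b_3(1-|\psi|^2)^2$ up to acceptable errors.

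The \emph{main obstacle} lies in the lower bound: extracting the gradient and quartic GL terms requires a quantitatively tight scale separation between the macroscopic field $\psi(\cdot)$ and the microscopic kernel $\alpha^0$, and commutator errors between multiplication operators and functions of $-ih\nabla$ must be controlled uniformly in $h$. The exponent $h^{3+1/3}$ in \eqref{sigmab} reflects an optimisation between regularity losses in these commutator estimates and the width of the spectral gap; it also determines the error rate in \eqref{enthm}. The one-dimensional contact case is substantially cleaner than the general setting of \cite{FHSS} because $\alpha^0$ is explicit through \eqref{eq:alpha0expl}, no magnetic field enters, and the Green's function of $-h^2\nabla^2$ on the torus is bounded, which streamlines several of the estimates on $\sigma$ and on the remainder $\mathcal R$.
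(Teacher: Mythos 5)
Your upper bound follows the paper's route (a Gibbs trial state for an $H_\Delta$ built from a GL minimizer, evaluated by semiclassical expansion), and the cosmetic difference in the choice of $\Delta$ (a kernel $\tfrac h2(\psi(x)+\psi(y))t(x-y)$ versus the paper's simpler multiplication operator $-\Delta_0\psi(x)$, which suffices for the contact interaction) is harmless. The lower bound, however, has a genuine gap at its central step. You expand $\F^{\rm BCS}(\Gamma)-\F^{\rm BCS}(\Gamma_0)$ around the \emph{normal} state into a quadratic pairing form plus a remainder $\mathcal R$ "higher order in $\alpha$", and then assert both coercivity of the quadratic form on the complement of the $\alpha^0$-mode and that "$\mathcal R$ together with the contact term reproduces $b_2W|\psi|^2+b_3(1-|\psi|^2)^2$". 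Neither assertion is justified, and the second is precisely the hard part: at $T=T_c(1-Dh^2)<T_c$ the normal-state linearization $K_T-a\delta$ has a negative eigenvalue, so the quadratic form relative to $\Gamma_0$ is not coercive, and the Klein-type relative entropy inequality (the paper's Lemma~\ref{lem:klein}) only yields a quadratic lower bound plus one special non-quadratic term --- it cannot produce the quartic GL term with the correct coefficient $b_3$ as a lower bound. Making your route rigorous would require a controlled fourth-order expansion of the entropy for general low-energy states whose $\alpha$ is not small in any strong norm, which is exactly what the paper avoids.

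The paper's lower bound is structured differently and you would need its two-stage comparison (or an equivalent mechanism): first, a priori bounds $\|\alpha\|_2^2\leq Ch$ and near-positivity of $\Tr K_T^0\alpha\bar\alpha-ah\int_\calC|\alpha(x,x)|^2dx$ are obtained by comparing $\Gamma$ with the \emph{translation-invariant superconducting} minimizer $\Gamma^0$ (not $\Gamma_0$), where $K_T^0-ah\delta$, built with $\Delta_0$, is nonnegative with $\alpha^0$ as its unique zero mode; this gives the decomposition $\alpha=\tfrac12(\psi(x)+\psi(y))\alpha^0+\sigma$ with $H^1$ control on $\psi$ and a frequency cutoff at $\epsilon h^{-1}$ to gain the $H^2$ regularity needed for semiclassics. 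Second, one compares $\Gamma$ with $\Gamma_\Delta$ constructed from this extracted $\psi$, so that the full GL functional (including the $b_2$ and $b_3$ terms) emerges from the same semiclassical trace formula as in the upper bound, and only the residual $\tfrac T2\H(\Gamma,\Gamma_\Delta)-ah\int_\calC|\sigma(x,x)|^2dx\geq -C(\epsilon h^3+\epsilon^{-2}h^4)$ must be established, using Theorem~\ref{scop} and the frequency-support orthogonality between $\sigma_2$ and $\eta_1$. The exponent $1/3$ then comes from optimizing $\epsilon=h^{1/3}$ between these two error sources, not from a competition with "the width of the spectral gap" (the gap is of order one). Without this second comparison, or a genuinely controlled quartic expansion around the normal state, your outline does not yield the matching lower bound or the bound \eqref{sigmab}.
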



\subsection{The coefficients in the GL functional}

In order to give explicit expressions for the coefficients in the GL functional we introduce the functions
$$
g_0(z) = \frac{\tanh(z/2)}{z} \,, \qquad
g_1(z) = \frac{ e^{2 z} - 2 z e^{z}-1}{z^2 (1+e^{z})^2}\,
,\qquad
g_2(z) = \frac{2 e^{z} \left( e^{ z}-1\right)}{z \left(e^{z}+1\right)^3}\,.
$$
Setting, as usual, $\beta_c=T_c^{-1}$ we define
$$
c= \frac { 2 \int_\R \left[g_0( \beta_c (q^2 - \mu) )  - \beta_c (q^2 - \mu) g_1 (\beta_c(q^2 - \mu)) \right]\,dq }
{\beta_c \int_\R\frac{  g_1( \beta_c(q^2 - \mu))}{q^2 - \mu}\, dq} \,.
$$
The three coefficients of the G-L functional turn out to be as follows,
\begin{equation}\label{eq:coeff1}
b_1 =  cD\ \frac {\beta_c^2}{16} \int_{\R} \left( g_1(\beta_c(q^2 -\mu)) + 2\beta_c q^2\, g_2(\beta_c (q^2-\mu)) \right) \, \frac{dq}{2\pi} \,,
\end{equation}
\begin{equation}\label{eq:coeff2}
b_2 =  cD\ \frac {\beta_c^2}{4} \int_{\R}  g_1(\beta_c (q^2-\mu))\, \frac{dq}{2\pi}
\end{equation}
and
\begin{equation}\label{eq:coeff3}
b_3 = (cD)^2\ \frac{\beta_c^2}{16} \int_{\R} \, \frac{g_1(\beta_c(q^2-\mu))}{q^2-\mu}\, \frac{dq}{2\pi} \,.
\end{equation}

We shall now discuss the signs of these coefficients. First note that
$g_0(z)-zg_1(z) = (z g_0(z))'>0$ and $g_1(z)/z>0$, which implies that $c>0$.
Using $g_1(z)/z>0$ again, we see that $b_3>0$. In contrast, the coefficient
$b_2$ may have either sign, depending on the value of $\beta_c\mu$ (which
depends on $a$ and $\mu$). The coefficient $b_1$ is again positive, as the
following computation shows: using the fact that $g_2(z)=g_1'(z)+(2/z)g_1(z)$ we
find
\begin{align*}
 b_1 & =  cD\ \frac {\beta_c^2}{16} \int_{\R} \left( g_1(\beta_c(q^2 -\mu)) +
2\beta_c q^2\, \left( g_1'(\beta_c (q^2-\mu))+\frac{2g_1(\beta_c
(q^2-\mu))}{\beta_c (q^2-\mu)} \right) \right) \, \frac{dq}{2\pi} \\
& = cD\ \frac {\beta_c^2}{16} \int_{\R} \left( g_1(\beta_c(q^2 -\mu)) +
q \frac{d}{dq} \left( g_1(\beta_c (q^2-\mu)) \right)       
+ 4 q^2 \, \frac{g_1(\beta_c (q^2-\mu))}{q^2-\mu} \right) \, \frac{dq}{2\pi} \\
& = cD\ \frac {\beta_c^2}{4} \int_{\R} q^2\,  \frac{g_1(\beta_c
(q^2-\mu))}{q^2-\mu} \, \frac{dq}{2\pi}\,.
\end{align*}
The claimed positivity is now again a consequence of $g_1(z)/z>0$.


\section{Sketch of the proof}

In the following we will consider temperatures $T=T_c (1- Dh^2)$. It is not difficult to see that the solution $\Delta_0$ of the BCS gap equation \eqref{BCSgapequ}
is of order $\Delta_0 = O(h).$

It is useful to rewrite the BCS functional in a more convenient way.
Define $\Delta$ to be the multiplication operator
$$ \Delta = \Delta(x) = -\psi(x) \Delta_0,$$
where $\Delta_0$ is the solution of the BCS equation \eqref{BCSgapequ} for temperature $T$,
and $\psi$ a periodic function in $H^2_{\rm loc}(\R)$. Define further 
\begin{equation}\label{hdelta}
H_\Delta = \left( \begin{array}{cc}  -h^2 \nabla^2  -\mu + h^2 W(x) & \Delta \\ \overline{\Delta} & h^2\nabla^2 +\mu - h^2 W(x) \end{array} \right).
\end{equation}
Formally,  we can write the BCS functional as 
\begin{multline} 
\F^{\rm BCS}(\Gamma)  =  - \Tr (- h^2 \nabla^2 - \mu + h^2W) +\frac 12 \Tr H_\Delta \Gamma  - T S(\Gamma)  \\
+ \frac {1}{4ha} \Delta_0^2 \int_\calC |\psi(x)|^2 dx- 
h a \int_{\calC} \left|\frac{\Delta_0 \psi(x) }{2ha}-  \alpha(x,x)\right|^2 dx\,.
\end{multline}
The first two terms on the right are infinite, of course, only their sum is well-defined.
For an upper bound, we can drop the very last term. The terms on the first line are minimized for   
$ \Gamma_\Delta = 1/(1 + e^{\frac 1 T H_\Delta})$, which we choose as a trial state. Then 
\begin{align}\label{uppboun}
F^{\rm BCS} (T,\mu) & \leq  \F^{\rm BCS}(\Gamma_\Delta) - \F^{\rm BCS}(\Gamma_0) \\
& \leq  - \frac T 2  \Tr\left[ \ln(1+e^{-\tfrac 1TH_\Delta})-\ln(1+e^{-\tfrac 1T H_0})\right] + \frac {1}{4ha} \Delta_0^2 \int_\calC |\psi(x)|^2 dx \,. \nonumber
\end{align}
To complete the upper bound, we have to evaluate $\Tr [
\ln(1+e^{-H_\Delta/T})-\ln(1+e^{- H_0/T})]$. This is
done via a contour integral representation and semiclassical types of
estimates.

The lower bound is divided into several steps.  We first aim at an a
priori bound on $\alpha$ for a general state $\Gamma$, which has lower
energy than the translation-invariant state. With $$H_{\Delta_0}^0 =
\left( \begin{array}{cc} -h^2 \nabla^2 - \mu & \Delta_0 \\ \Delta_0& -
    h^2 \nabla^2 + \mu\end{array}\right)\,,$$ we can rewrite the
BCS functional in the form
\begin{multline}\label{represequ2}
\F^{\rm BCS}(\Gamma)  =  - \Tr (- h^2 \nabla^2 - \mu + h^2W) +  
\frac 12 \tr H^0_{\Delta_0} \Gamma - T S(\Gamma)  \\ + h^2 \Tr W\gamma  + \frac {1}{4ha} \Delta_0^2 - 
h a \int_{\calC} \left|\frac{\Delta_0 }{2ha}-  \alpha(x,x)\right|^2 dx.
\end{multline}
{}From the BCS equation and the definition of $\alpha^0$ in  \eqref{def:a0} we conclude that 
\begin{equation}\label{al0}
\alpha^0(0)  = \frac 1{2\pi h} \int_\R \frac{\Delta_0}{2 K_T^0(p)} dp = \frac {\Delta_0}{2a h},
\end{equation}
and hence 
\begin{equation}\label{eq12}
\F^{\rm BCS} (\Gamma) - \F^{\rm BCS}(\Gamma^0) \geq \frac T2  \H(\Gamma,\Gamma^0) +h^2 \Tr W(\gamma - \gamma^0) - a h\int_\calC |\alpha(x,x)- \alpha^0(0)|^2 dx,
\end{equation}
where 
$ \H$ denotes the relative entropy 
\begin{align}\nonumber
\H(\Gamma,\Gamma^0) & =  \frac 2 T \left(\frac 12 \tr H_{\Delta_0}^0 \Gamma - T S(\Gamma) + \frac 12 \tr H_{\Delta_0}^0 \Gamma^0 - T S(\Gamma^0)\right) \\
& =  \Tr\left[ \Gamma \left( \ln\Gamma - \ln\Gamma^0\right) + 
(1-\Gamma) \left( \ln(1-\Gamma )- \ln (1-\Gamma^0)\right)\right].
\end{align}
Note that the left side of (\ref{eq12}) is necessarily non-positive
for a minimizing state $\Gamma$.

One of the essential steps in our proof, which is used on several
occasions, is Lemma \ref{lem:klein}. This Lemma presents a lower bound
on the relative entropy of the form
\begin{align}\nonumber
\H(\Gamma,\Gamma^0)& \geq \Tr\left[ H^0 \left( \Gamma - \Gamma^0\right)^2\right]  \\ & \quad + \frac 13 \frac {\left( \Tr \Gamma(1-\Gamma) - \Tr\Gamma_0(1-\Gamma^0)\right)^2}{\left| \Tr\Gamma(1-\Gamma) - \Tr\Gamma^0(1-\Gamma^0)\right| + \Tr \Gamma^0(1-\Gamma^0)},\label{essinequ}
\end{align}
where $H^0 = (1-2\Gamma_0)^{-1}\ln ( (1-\Gamma^0)/\Gamma^0) $. In our case here, it equals $K_T^0(-ih\nabla)/T$, with $K_T^0$ defined in (\ref{eq:kt0}). From \eqref{eq12} we deduce that for a minimizer $\Gamma$ 
\begin{multline}\label{bsinequ}
0 \geq \F^{\rm BCS} (\Gamma) - \F^{\rm BCS}(\Gamma^0) \geq  \Tr K^0_T(\gamma - \gamma^0)^2 +h^2 \Tr W(\gamma - \gamma^0) \\ + 
 \int_{\calC} \langle \alpha(\,\cdot\,,y) - \alpha^0(\tfrac{\,\cdot\, - y}h) | K^0_T (-ih\nabla)- a \delta (\tfrac{\,\cdot\, -y}h) | \alpha(\,\cdot\,,y)-\alpha^0(\tfrac{\,\cdot\, - y}h)\rangle \, dy \\ + 
   \frac 13 \frac{T \left( \Tr\left[ \gamma(1-\gamma)-\gamma^0(1-\gamma^0)  - |\alpha|^2 + |\alpha^0|^2\right]\right)^2}{ \left| \Tr\left[ \gamma(1-\gamma)-\gamma^0(1-\gamma^0)  -|\alpha|^2 +|\alpha^0|^2\right]  \right| + \Tr\left[\gamma^0(1-\gamma^0) - |\alpha^0|^2\right]},
 \end{multline}
where $\langle\cdot |\cdot\rangle$ denotes the inner product in $L^2(\R)$. 
Observe that the term in the second line is a convenient way to write $ \Tr K_T^0 (\alpha -\alpha^0)^*(\alpha - \alpha^0) - a h\int_\calC |\alpha(x,x)- \alpha^0(0)|^2 dx$.
{}From the first line on the right side and the Schwarz inequality together with the fact that $K^0_T  - a \delta \geq 0$, we obtain
first that $\Tr K^0_T(\gamma - \gamma^0)^2 \leq O(h^3)$. Together with the last line this further gives the a priori bound $\|\alpha\|_2^2 \leq O(h)$. 

Next, we use that $K^0_T - a \delta$ has $\alpha^0$ as unique zero
energy ground state, with a gap of order one above zero, and we can
further conclude from \eqref{bsinequ} that $\alpha$ is necessarily of
the form
$$\alpha(x,y) = \tfrac {1}{2} (\psi(x) + \psi(y))\alpha^0((x-y)/h) + \beta(x,y),$$
with $\|\beta\|_2^2 \leq O(h^3)$.  This information about the
decomposition of $\alpha$ then allows us to deduce, again by means of
a lower bound of the type \eqref{essinequ}, that the difference
$\F^{\rm BCS} (\Gamma) - \F^{\rm BCS}(\Gamma_\Delta)$ is very small
compared to $h^3$. This reduces the problem to the computation we
already did in the upper bound.


\section{Semiclassics}

One of the key ingredients in both the proof of the upper and the
lower bound are detailed semiclassical asymptotics for operators of
the form
\begin{equation}
 \label{eq:hdelta}
H_\Delta = \left( \begin{array}{cc}  -h^2 \nabla^2  -\mu + h^2 W(x) & \Delta(x)
\\ \overline{\Delta(x)} & h^2\nabla^2 +\mu - h^2 W(x) \end{array} \right) \,.
\end{equation}
Here $\Delta(x)=-h\psi(x)$ with a periodic function $\psi$, which is
of order one as $h\to 0$ (but might nevertheless depend on $h$). We
are interested in the regime $h\to 0$. In contrast to traditional
semiclassical results \cite{helffer, robert} we work under minimal smoothness assumptions on
$\psi$ and $W$. To be precise, we assume Assumption~\ref{ass:1} for
$W$ and that $\psi$ is a periodic function in $H_{\rm loc}^2(\R)$.

Our first result concerns the free energy.

\begin{Theorem}\label{sc}
Let 
\begin{equation}\label{deff}
f(z) = -   \ln \left(1+ e^{-z}\right) \ ,
\end{equation}
and define
\begin{equation}\label{defg0}
g_0(z) = \frac{f'(-z) - f'(z)}{z} = \frac { \tanh\left(\tfrac 12 z\right)}{z}\,,
\end{equation}
\begin{equation}\label{defg1}
g_1(z) = - g_0'(z) = \frac{f'(-z)-f'(z)}{z^2} + \frac{f''(-z)+f''(z)}{z} = \frac{ e^{2 z} - 2 z e^{z}-1}{z^2 (1+e^{z})^2}
\end{equation}
and 
\begin{equation}\label{defg2}
g_2(z) =  g_1'(z) + \frac 2 z \,g_1(z) =  \frac{f'''(z)-f'''(-z)}{z} = \frac{2 e^{z} \left( e^{ z}-1\right)}{z \left(e^{z}+1\right)^3}\,.
\end{equation}
Then, for any $\beta > 0$,
\begin{align}
\frac {h}{\beta}\, \Tr\left[ f(\beta H_\Delta) - f(\beta H_0)\right]  & = h^2
E_1 + h^4 E_2 + O(h^6)  \left( \|\psi\|_{H^1(\calC)}^6+
\|\psi\|_{H^2(\calC)}^2\right)\,, \label{210}
\end{align}
where
$$
E_1 = -\frac\beta 2 \|\psi\|_2^2 \int_\R g_0(\beta (q^2-\mu))\,\frac{dq}{2\pi}
$$
and
\begin{align*}
E_2 = & \frac{\beta^2}8 \|\psi'\|_2^2 \int_\R ( g_1(\beta (q^2-\mu))+2\beta q^2 g_2(\beta (q^2-\mu)) )\,\frac{dq}{2\pi} \\
& + \frac{\beta^2}2 \langle\psi|W|\psi\rangle \int_\R g_1(\beta (q^2-\mu)) \,\frac{dq}{2\pi} \\
& + \frac{\beta^2}8 \|\psi\|_4^4 \int_\R \frac{g_1(\beta (q^2-\mu))}{q^2-\mu} \,\frac{dq}{2\pi} \,.
\end{align*}
\end{Theorem}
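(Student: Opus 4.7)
The plan is to combine a Cauchy integral representation with a Born series in the off-diagonal perturbation $V := H_\Delta - H_0$ of operator norm $O(h)$, and then to apply a careful semiclassical Weyl expansion to each term of the series. Writing
$$
f(\beta H_\Delta) - f(\beta H_0) = \frac{1}{2\pi i}\oint_\Gamma f(\beta z)\,\bigl[(z-H_\Delta)^{-1} - (z-H_0)^{-1}\bigr]\,dz
$$
for a contour $\Gamma$ encircling the spectra of $H_0$ and $H_\Delta$, and iterating the resolvent identity, I obtain
$$
(z-H_\Delta)^{-1} - (z-H_0)^{-1} = \sum_{n\ge 1}\bigl[(z-H_0)^{-1}V\bigr]^n (z-H_0)^{-1}.
$$
Since $V$ is block off-diagonal while $(z-H_0)^{-1}$ is block diagonal, only even $n$ survive the matrix trace.

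Next I would analyze the $n=2$ contribution in detail. With $H_W := -h^2\nabla^2 - \mu + h^2 W$, the block structure, cyclicity, and the product rule yield
$$
\Tr\bigl[\bigl((z-H_0)^{-1}V\bigr)^2(z-H_0)^{-1}\bigr] = -\frac{d}{dz}\Bigl\{h^2\,\Tr\!\left[(z-H_W)^{-1}\psi\,(z+H_W)^{-1}\bar\psi\right]\Bigr\},
$$
so after integrating by parts in $z$ the $n=2$ contribution becomes
$$
\frac{h^2\beta}{2\pi i}\oint_\Gamma f'(\beta z)\,\Tr\!\left[(z-H_W)^{-1}\psi\,(z+H_W)^{-1}\bar\psi\right]\,dz,
$$
which is naturally expressed through $g_0$, since $g_0(z) = (f'(-z)-f'(z))/z$. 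Replacing $H_W$ by its principal Weyl symbol $q^2-\mu$ and extracting the phase-space volume factor $(2\pi h)^{-1}$ produces the leading term $(h^2/\beta)E_1$. The $O(h^4)$ corrections have two origins. The $h^2 W$ piece of $H_W$, treated by a first-order Duhamel expansion, yields the $\langle\psi|W|\psi\rangle$-contribution to $E_2$. The non-commutativity of $\psi$ with $-ih\nabla$, handled via $[\psi, F(-ih\nabla)] = ih\psi' F'(-ih\nabla) + O(h^2)$ and integration by parts, yields the $\|\psi'\|_2^2$-contribution. The relations $g_1 = -g_0'$ and $g_2 = g_1' + (2/z)g_1$ are exactly what identifies the resulting $q$-integrals with the coefficients in $E_2$.

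I would then evaluate the $n=4$ term at its leading order, since its total contribution is already $O(h^4)$. Replacing each of the five resolvents by its principal symbol, collapsing the four multiplication operators $\psi,\bar\psi,\psi,\bar\psi$ to a common point (the commutator error being of higher order in $h$), and performing the $z$-contour integral then yields $\tfrac{h^4\beta^2}{8}\|\psi\|_4^4 \int_\R g_1(\beta(q^2-\mu))/(q^2-\mu)\,dq/(2\pi)$, the third term of $E_2$.

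The main obstacle is sharp control of the remainders under the weak regularity of $\psi$ (only $H^2_{\rm loc}$) and $W$ (only bounded). For the $n=2$ term this requires tracking the second-order Taylor remainder of $\psi$ inside the Weyl expansion, which is the source of the $\|\psi\|_{H^2(\calC)}^2$ factor in the error. For $n\ge 6$ the tail is bounded using a Schatten-class estimate for $(z-H_0)^{-1}$ along $\Gamma$ together with $\|V\|_\infty \le Ch\|\psi\|_\infty \le Ch\|\psi\|_{H^1(\calC)}$ by Sobolev embedding on the unit cell, giving the $O(h^6)\|\psi\|_{H^1(\calC)}^6$ bound. The $O(h^6)$ remainder from the $n=4$ term is controlled similarly. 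Carrying out all these estimates under minimal regularity assumptions, rather than via smooth pseudo-differential calculus as in \cite{helffer, robert}, is the principal technical work of the proof.
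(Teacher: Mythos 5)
Your proposal is correct and follows essentially the same route as the paper: a contour representation of $f(\beta H_\Delta)-f(\beta H_0)$ along the lines $\im z=\pm\pi/(2\beta)$ inside the strip of analyticity of $f$, an expansion in the off-diagonal perturbation $\delta=H_\Delta-H_0=O(h)$ of which only even orders contribute to the trace, explicit evaluation of the quadratic and quartic terms (producing $E_1$ and the three pieces of $E_2$ via $g_0$, $g_1$, $g_2$), and Schatten-norm resolvent bounds along the contour together with Sobolev embedding for the remainders. Two points to be aware of: summing the $n\ge 6$ tail of your Born series as a geometric series requires $h\|\psi\|_{H^1}$ small (the paper avoids this by truncating and keeping the full resolvent $(z-H_\Delta)^{-1}$ in the sixth-order remainder $I_3$), and your assertion that the symbol-collapse/commutator errors in the $n=2$ and $n=4$ terms are of higher order tacitly uses that the first-order Taylor terms of the momentum kernels vanish by evenness in $p$ -- the paper's bounds $|G(p)-G(0)-\tfrac12 p^2G''(0)|\le Cp^4$ and $|F(p_1,p_2,p_3)-F(0,0,0)|\le C(p_1^2+p_2^2+p_3^2)$ -- without which the error would only be $O(h^5)$ rather than the claimed $O(h^6)$.
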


More precisely, we claim that the diagonal entries of the $2\times2$
matrix-valued operator $f(\beta H_\Delta) - f(\beta H_0)$ are locally trace
class and that the sum of their traces per unit volume is given by
\eqref{210}. We sketch the proof of Theorem~\ref{sc} in Subsection \ref{sec:sc} below
and refer to \cite{FHSS} for some technicalities.

Our second semiclassical result concerns the behavior of $(1+
\exp(\beta H_\Delta))^{-1}$ in the limit $h\to 0$. More precisely, we
are interested in $[ (1+ \exp(\beta H_\Delta))^{-1}]_{12}$, where $[\,
\cdot\,]_{12}$ denotes the upper off-diagonal entry of an
operator-valued $2\times 2$ matrix. For this purpose, we define the
$H^1$ norm of a periodic operator $\eta$ by
\begin{equation}
 \label{eq:h1op}
\| \eta \|_{H^1}^2 = \Tr \left[ \eta^* (1-h^2\nabla^2)\eta \right] \,.
\end{equation}
In Subsection \ref{sec:scop} we shall prove

\begin{Theorem}\label{scop}
Let
\begin{equation}\label{defrho}
\rho(z) = \left(1+ e^{z}\right)^{-1}
\end{equation}
and let $g_0$ be as in \eqref{defg0}. Then
$$
\left[\rho(\beta H_\Delta)\right]_{12} = \frac{\beta h}4 \left(\psi(x)\, g_0(\beta(-h^2\nabla^2-\mu)) + g_0(\beta(-h^2\nabla^2-\mu))\, \psi(x) \right) + \eta_1 + \eta_2
$$
where
\begin{equation}\label{eq:eta1sc}
\| \eta_1 \|_{H^1}^2 \leq C h^5 \|\psi\|^2_{H^2(\calC)}
\end{equation}
and
\begin{equation}\label{eq:eta2}
\| \eta_2 \|_{H^1}^2 \leq C h^5 \left( \|\psi\|^2_{H^1(\calC)} + \|\psi\|^6_{H^1(\calC)} \right) \,.
\end{equation}
\end{Theorem}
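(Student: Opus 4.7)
The plan is to combine a contour-integral representation for $\rho(\beta H_\Delta)$ with a resolvent expansion around the block-diagonal operator $H_0 = \mathrm{diag}(\mathfrak{h},-\mathfrak{h})$, where $\mathfrak{h} := -h^2\nabla^2 - \mu + h^2 W$. Since $\rho$ is meromorphic with simple poles only at $z = i\pi(2k+1)$, $k\in\Z$, one may choose a contour $\Gamma$ surrounding the real axis but avoiding these poles so that
$$
\rho(\beta H_\Delta) = \frac{1}{2\pi i}\oint_\Gamma \rho(\beta z)\,(z - H_\Delta)^{-1}\, dz.
$$
With $V := H_\Delta - H_0$ the off-diagonal perturbation (upper-right entry $\Delta = -h\psi$) and $R_0(z) := (z-H_0)^{-1}$, iterating the resolvent identity yields
$$
(z-H_\Delta)^{-1} = R_0 + R_0 V R_0 + R_0 V R_0 V R_0 V (z-H_\Delta)^{-1}.
$$
Because $V$ is off-diagonal and $R_0$ is block-diagonal, only odd powers of $V$ contribute to the $[12]$-entry, so
$$
[\rho(\beta H_\Delta)]_{12} = \frac{1}{2\pi i}\oint_\Gamma \rho(\beta z)\,\frac{1}{z-\mathfrak{h}}\,\Delta\,\frac{1}{z+\mathfrak{h}}\,dz + \mathcal{R},
$$
where $\mathcal{R}$ is the cubic remainder, which will end up in $\eta_2$.

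In the first-order term I would next replace $\mathfrak{h}$ by $\mathfrak{h}_0 := -h^2\nabla^2 - \mu$ via the identity $(z-\mathfrak{h})^{-1} = (z-\mathfrak{h}_0)^{-1} + (z-\mathfrak{h})^{-1}h^2 W (z-\mathfrak{h}_0)^{-1}$; the extra $h^2 W$ together with the $h$ in $\Delta$ produces a correction of $H^1$-norm squared $\lesssim h^5\|\psi\|_{H^1}^2$, absorbed into $\eta_2$. With $\mathfrak{h}_0$ in place, the spectral theorem and residues at $z=\lambda$ and $z = -\lambda'$ give
$$
\frac{1}{2\pi i}\oint_\Gamma \rho(\beta z) \frac{dz}{(z-\lambda)(z+\lambda')} = \frac{\rho(\beta\lambda) - \rho(-\beta\lambda')}{\lambda + \lambda'} = -\frac{1}{2}\frac{\tanh(\beta\lambda/2) + \tanh(\beta\lambda'/2)}{\lambda+\lambda'},
$$
so the leading term becomes
$$
\frac{h}{2}\iint \frac{\tanh(\beta\lambda/2) + \tanh(\beta\lambda'/2)}{\lambda+\lambda'}\,dE_\lambda\,\psi\,dE_{\lambda'},
$$
with $dE_\lambda$ the spectral measure of $\mathfrak{h}_0$. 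The target $\frac{\beta h}{4}(\psi g_0(\beta\mathfrak{h}_0) + g_0(\beta\mathfrak{h}_0)\psi)$ has spectral kernel $\frac{h}{4}(\tanh(\beta\lambda/2)/\lambda + \tanh(\beta\lambda'/2)/\lambda')$, which coincides with the above on the diagonal $\lambda=\lambda'$; a short algebraic manipulation shows the difference of the two kernels equals
$$
\frac{h}{4}\,\frac{(\lambda - \lambda')\bigl(\lambda'\tanh(\beta\lambda/2) - \lambda\tanh(\beta\lambda'/2)\bigr)}{\lambda\lambda'(\lambda+\lambda')},
$$
which vanishes to second order as $\lambda \to \lambda'$. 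When inserted between $dE_\lambda\,\psi\,dE_{\lambda'}$ this factor acts effectively as two commutators $[\mathfrak{h}_0,\psi] = -h^2\psi'' - 2h^2\psi'\nabla$; each contributes a factor of $h$ and one derivative of $\psi$, yielding an operator of $H^1$-norm squared $\lesssim h^5\|\psi\|_{H^2}^2$. This is $\eta_1$.

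The cubic remainder $\mathcal{R}$ carries three factors of $\Delta$; using the one-dimensional Sobolev embedding $H^1(\calC)\hookrightarrow L^\infty(\calC)$, each contributes $\|\Delta\|_\infty \lesssim h\|\psi\|_{H^1}$. Combined with trace-per-unit-volume Schatten bounds on the resolvents $R_0$ integrated against $\rho(\beta z)$ along $\Gamma$, this produces $\|\mathcal{R}\|_{H^1}^2 \lesssim h^5 \|\psi\|_{H^1}^6$, the $\|\psi\|_{H^1}^6$ piece of the estimate on $\eta_2$; the $\|\psi\|_{H^1}^2$ piece comes from the $W$-correction above.

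The main obstacle lies in making the $H^1$-operator-norm estimates fully quantitative, since $H^1$ here means $\Tr[\eta^*(1-h^2\nabla^2)\eta]$ in the trace per unit volume. This demands a careful choice of contour so that the resolvents satisfy useful Schatten-norm bounds, a translation of the double operator integrals into explicit Fourier-side estimates exploiting the periodicity of $\psi$ and $W$, and coordinated use of the one-dimensional Sobolev embedding. The same double operator integral technology that underlies Theorem~\ref{sc} applies here, but one must now track a non-scalar operator norm rather than just a trace.
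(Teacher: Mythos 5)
Your proposal follows essentially the same route as the paper: the contour representation of $[\rho(\beta H_\Delta)]_{12}$, a resolvent expansion in the off-diagonal perturbation keeping only odd orders, extraction of the symmetrized main term $\frac{\beta h}{4}(\psi\, g_0(\beta k_0)+g_0(\beta k_0)\,\psi)$ by residues, a commutator-type correction (your second-order-vanishing double-operator-integral kernel is exactly the paper's $\eta_1$ with its $J(hp)\leq C(hp)^4$ Fourier bound), and the $W$-correction plus cubic remainder estimated in the trace-per-unit-volume Schatten norms to give the $h^5\|\psi\|_{H^1}^2$ and $h^5\|\psi\|_{H^1}^6$ pieces of $\eta_2$. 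The only details you gloss over are minor and handled in the paper (e.g.\ using $\rho(z)=1-\rho(-z)$ to gain integrability along the half of the contour where $\rho(\beta z)$ does not decay), so the argument is correct in outline.
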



\section{Upper Bound}

We assume that $T=T_c(1-Dh^2)$ with a fixed $D>0$ and denote by $\Delta_0$ the solution of the BCS gap equation \eqref{BCSgapequ}. In the following we write, as usual, $\beta=T^{-1}=\beta_c(1-Dh^2)^{-1}$ with $\beta_c=T_c^{-1}$. It is well known that the Ginzburg-Landau functional has a minimizer $\psi$, which is a periodic $H^2_{\rm loc}(\R)$ function. We put
$$
\Delta(x) = -\Delta_0 \psi(x) ,
$$
and define $H_\Delta$ by \eqref{hdelta}.

To obtain an upper bound for the energy we use the trial state
$$
\Gamma_\Delta = \left(1 + e^{\beta H_\Delta}\right)^{-1} \,.
$$
Denoting its off-diagonal element by $\alpha_\Delta=[\Gamma_\Delta]_{12}$, we have
the upper bound
\begin{align}\label{upb}
\F^{\rm BCS}(\Gamma_\Delta)  - \F^{\rm BCS}(\Gamma_0) 
& = -\frac 1{2\beta} \Tr\left[ \ln(1+e^{-\beta
H_\Delta})-\ln(1+e^{-\beta H_0})\right]  \nonumber \\
& \quad\, + \frac {\Delta_0^2}{4ha} \|\psi\|_2^2
-ha \int_C \left|\frac{\Delta_0\psi(x)}{2ha} -\alpha_\Delta(x,x)\right|^2 \,dx
\\ \nonumber & \leq -\frac 1{2\beta} \Tr\left[ \ln(1+e^{-\beta
H_\Delta})-\ln(1+e^{-\beta H_0})\right]  + \frac {\Delta_0^2}{4ha} \|\psi\|_2^2
\,.
\end{align}
The first term on the right side was evaluated in Theorem \ref{sc}. Applying this
theorem with $\psi$ replaced by $(\Delta_0/h)\psi$ we obtain that
\begin{align}\label{eq:upper1}
&\F^{\rm BCS}(\Gamma_\Delta)  - \F^{\rm BCS}(\Gamma_0)  \notag \\ 
& \leq - \frac{h\beta}4 \frac{\Delta_0^2}{h^2} \|\psi\|_2^2 \int_\R g_0(\beta
(q^2-\mu))\,\frac{dq}{2\pi} \notag \\
& \quad + \frac{h^3}{2} \left[
\frac{\beta^2}8 \frac{\Delta_0^2}{h^2} \|\psi'\|_2^2 \int_\R ( g_1(\beta (q^2-\mu))+2\beta q^2 g_2(\beta (q^2-\mu)) )\,\frac{dq}{2\pi} \right. \notag \\
& \quad\quad\quad + \left. \frac{\beta^2}2 \frac{\Delta_0^2}{h^2} \langle\psi|W|\psi\rangle \int_\R g_1(\beta (q^2-\mu)) \,\frac{dq}{2\pi} + \frac{\beta^2}8 \frac{\Delta_0^4}{h^4} \|\psi\|_4^4 \int_\R \frac{g_1(\beta (q^2-\mu))}{q^2-\mu} \,\frac{dq}{2\pi} \right] \notag \\
& \quad + \frac {\Delta_0^2}{4ha} \|\psi\|_2^2 + O(h^5) \,.
\end{align}
In the estimate of the remainder we used that $\psi$ is $H^2$ and that
$\Delta_0\leq Ch$.

Next, we use that by definition \eqref{BCSgapequ} of $\Delta_0$ the first and
the last term on the right side of \eqref{eq:upper1} cancel to leading order
and that one has
\begin{align*}
& - \frac{h\beta}4 \frac{\Delta_0^2}{h^2} \|\psi\|_2^2 \int_\R
g_0(\beta (q^2-\mu))\,\frac{dq}{2\pi} 
+ \frac {\Delta_0^2}{4ha} \|\psi\|_2^2 \\
& \quad = \frac{h\beta}4 \frac{\Delta_0^2}{h^2} \|\psi\|_2^2 \int_\R
\left(g_0(\beta \sqrt{(q^2-\mu)^2+\Delta_0^2}) - g_0(\beta (q^2-\mu))\right)
\,\frac{dq}{2\pi} \\
& \quad = - \frac{h^3 \beta^2}{8} \frac{\Delta_0^4}{h^4}  \|\psi\|_2^2 \int_\R
\frac{g_1(\beta (q^2-\mu))}{q^2-\mu} \,\frac{dq}{2\pi} + O(h^5) \,.
\end{align*}
We conclude that
\begin{align}\label{eq:upper2}
&\F^{\rm BCS}(\Gamma_\Delta)  - \F^{\rm BCS}(\Gamma_0)  \notag \\ 
& \leq \frac{h^3}{2} \left[
\frac{\beta^2}8 \frac{\Delta_0^2}{h^2} \|\psi'\|_2^2 \int_\R ( g_1(\beta (q^2-\mu))+2\beta q^2 g_2(\beta (q^2-\mu)) )\,\frac{dq}{2\pi} \right. \notag \\
& \quad\quad\quad + \frac{\beta^2}2 \frac{\Delta_0^2}{h^2} \langle\psi|W|\psi\rangle \int_\R g_1(\beta (q^2-\mu)) \,\frac{dq}{2\pi} + \frac{\beta^2}8 \frac{\Delta_0^4}{h^4} \|\psi\|_4^4 \int_\R \frac{g_1(\beta (q^2-\mu))}{q^2-\mu} \,\frac{dq}{2\pi} \notag \\
& \quad\quad\quad - \left. \frac{\beta^2}4 \frac{\Delta_0^4}{h^4} \|\psi\|_2^2 \int_\R \frac{g_1(\beta (q^2-\mu))}{q^2-\mu} \,\frac{dq}{2\pi} \right] + O(h^5) \,.
\end{align}
Up to an error of the order $O(h^5)$ we can replace $\beta=\beta_c(1-Dh^2)^{-1}$ by $\beta_c$ on the right side. Our last task is then to compute the asymptotics of $\Delta_0/h$. To do so, we rewrite the BCS gap equation \eqref{BCSgapequ} as
$$
\beta_c \int_\R g_0 (\beta_c (q^2-\mu)) \,\frac{dq}{2\pi}
= \frac 1a 
= \beta \int_\R g_0 (\beta \sqrt{(q^2-\mu)^2+\Delta_0^2}) \,\frac{dq}{2\pi} \,.
$$
A simple computation shows that
$$
\Delta_0^2 = D h^2 \ \frac { \int_\R \left[g_0( \beta_c (q^2 - \mu) )  - \beta_c (q^2 - \mu) g_1 (\beta_c(q^2 - \mu)) \right]\,dq }
{\beta_c \int_\R\frac{  g_1( \beta_c(q^2 - \mu))}{2 (q^2 - \mu)}\, dq} \left(1+ O(h^2)\right) \,.
$$
Inserting this into \eqref{eq:upper2} and using the fact that $\E(\psi) = E^{\rm GL}$ we arrive at the upper bound claimed in Theorem \ref{thm:main}.


\section{Lower Bound}

\subsection{The relative entropy}

As a preliminary to our proof of the lower bound, we present a general estimate for the relative entropy. In this subsection $H^0$ and $0\leq\Gamma\leq 1$ are arbitrary self-adjoint operators in a Hilbert space, not necessarily coming from BCS theory. Let $\Gamma^0:=\left(1 + \exp(\beta H^0)\right)^{-1}$. It is well-known that
$$
\H(\Gamma,\Gamma^0) = \Tr \left( \beta H^0 \Gamma + \Gamma\ln\Gamma+ (1-\Gamma)\ln(1-\Gamma) + \ln\left(1+ \exp(-\beta H_0)\right) \right)
$$
is non-negative and equals to zero if and only if $\Gamma=\Gamma^0$. Solving this equation for $H^0$, i.e., $H^0=\beta^{-1}( \ln(1-\Gamma^0) - \ln\Gamma^0)$, we can rewrite $\H(\Gamma,\Gamma^0)$ as a relative entropy,
\begin{align} \label{eq:relent}
\H(\Gamma,\Gamma^0) = \Tr\left[ \Gamma \left( \ln\Gamma - \ln\Gamma^0\right) + 
(1-\Gamma) \left( \ln(1-\Gamma )- \ln (1-\Gamma^0)\right)\right] \,.
\end{align}
The following lemma quantifies the positivity of $\H$ and improves an earlier result from \cite{HLS}.

\begin{Lemma}\label{lem:klein}
For any $0\leq \Gamma\leq 1$ and any $\Gamma_0$ of the form $\Gamma^0 = (1+e^{\beta H^0})^{-1}$, 
\begin{align*}
\H(\Gamma,\Gamma^0)& \geq  \Tr\left[ \frac {\beta H^0}{\tanh (\beta H^0/2)} \left( \Gamma - \Gamma^0\right)^2\right]  \\ & \quad + \frac 13 \frac {\left( \Tr \Gamma(1-\Gamma) - \Tr\Gamma^0(1-\Gamma^0)\right)^2}{\left| \Tr\Gamma(1-\Gamma) - \Tr\Gamma^0(1-\Gamma^0)\right| + \Tr \Gamma^0(1-\Gamma^0)} \,.
\end{align*}
\end{Lemma}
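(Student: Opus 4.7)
My plan is to reduce the operator inequality to a scalar one via a simultaneous spectral decomposition, then recover the first summand on the right-hand side by direct reorganization and the second summand by a joint-convexity argument. To carry out the reduction, diagonalize $\Gamma^0 = \sum_j b_j |\phi_j\rangle\langle\phi_j|$ and $\Gamma = \sum_i \gamma_i |\psi_i\rangle\langle\psi_i|$ and set $c_{ji} = \langle\phi_j|\psi_i\rangle$; writing $\phi(x) = x\ln x + (1-x)\ln(1-x)$ so that $\phi'(\Gamma^0) = -\beta H^0$, a short expansion of the three traces in $\H(\Gamma,\Gamma^0) = \Tr[\phi(\Gamma) - \phi(\Gamma^0) - \phi'(\Gamma^0)(\Gamma - \Gamma^0)]$ in these bases, combined with the unitarity relations $\sum_i |c_{ji}|^2 = \sum_j |c_{ji}|^2 = 1$, yields the key identity
\begin{equation*}
\H(\Gamma,\Gamma^0) = \sum_{i,j} |c_{ji}|^2\, h(\gamma_i, b_j), \qquad h(a,b) := a\ln\tfrac{a}{b} + (1-a)\ln\tfrac{1-a}{1-b}.
\end{equation*}

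The heart of the matter is then the one-variable sharpening of Klein's inequality
\begin{equation*}
h(a,b) \;\geq\; \frac{\ln((1-b)/b)}{1-2b}(a-b)^2 \;+\; \frac{1}{3}\,\frac{(p(a)-p(b))^2}{|p(a)-p(b)| + p(b)},\qquad a,b\in[0,1],
\end{equation*}
with $p(x) := x(1-x)$, which I expect to be the main obstacle. The coefficient $\ln((1-b)/b)/(1-2b) = \phi'(b)/(2b-1)$ is neither the Taylor value $\phi''(b)/2 = 1/(2b(1-b))$ nor the Pinsker constant $2$, but an averaged $\phi''$; selecting the correct higher-order correction, whose form is forced by the factorization $p(a)-p(b) = (a-b)(1-a-b)$, is delicate. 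My plan for this step is a direct calculus argument: the difference of the two sides and its first $a$-derivative vanish at $a=b$, and the remainder is controlled case-by-case on the signs of $a-b$ and $1-a-b$, using the explicit formula $\phi''(t) = 1/(t(1-t))$ and a careful comparison of numerator and denominator in the correction.

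Granted the scalar inequality, summation over $(i,j)$ produces the two operator terms of the bound. In the $\Gamma^0$-eigenbasis the multiplier $\beta H^0/\tanh(\beta H^0/2)$ is diagonal with entries $\ln((1-b_j)/b_j)/(1-2b_j)$, and a short computation gives $((\Gamma-\Gamma^0)^2)_{jj} = \sum_i (\gamma_i - b_j)^2 |c_{ji}|^2$, so the first double sum reproduces $\Tr\bigl[(\beta H^0/\tanh(\beta H^0/2))(\Gamma-\Gamma^0)^2\bigr]$ exactly. For the second sum I would use that $F(x,p) := x^2/(|x|+p)$ is $1$-homogeneous and jointly convex on $\R\times[0,\infty)$: on $\{x>0\}$ its Hessian is the rank-one positive semi-definite matrix with diagonal entries $(2p^2,2x^2)/(x+p)^3$ and off-diagonal $-2xp/(x+p)^3$, and one checks $C^2$-continuity across $x=0$. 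Any $1$-homogeneous convex function satisfies the conic subadditivity $F(\sum_\ell w_\ell x_\ell, \sum_\ell w_\ell p_\ell) \leq \sum_\ell w_\ell F(x_\ell, p_\ell)$ for $w_\ell \geq 0$; applying this with $\ell=(i,j)$, $x_\ell = p(\gamma_i)-p(b_j)$, $p_\ell = p(b_j)$ and $w_\ell = |c_{ji}|^2$ collapses the second double sum into $F\bigl(\Tr\Gamma(1-\Gamma)-\Tr\Gamma^0(1-\Gamma^0),\ \Tr\Gamma^0(1-\Gamma^0)\bigr)$, which is exactly the second term of the claimed bound (keeping the prefactor $1/3$). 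Combining the two pieces completes the proof, with the extension from finite to infinite dimensions handled by a standard density argument.
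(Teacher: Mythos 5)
Your proposal is correct and follows the same core strategy as the paper: everything rests on the same scalar inequality $h(a,b)\geq \frac{\ln((1-b)/b)}{1-2b}(a-b)^2+\frac13\frac{(p(a)-p(b))^2}{|p(a)-p(b)|+p(b)}$, which is then lifted to operators by a Klein-type argument. The only difference is in the lift. The paper writes $\frac{x^2}{|x|+p}=4\sup_{0<b<1}\left[b(1-b)|x|-b^2p\right]$, replaces $|x|$ by $\max\{x,-x\}$, and invokes Klein's inequality for each of the resulting expressions of the form $\sum_k f_k(\Gamma)g_k(\Gamma^0)$; you instead expand in the joint eigenbases to get $\H(\Gamma,\Gamma^0)=\sum_{i,j}|c_{ji}|^2h(\gamma_i,b_j)$ (which is exactly the standard proof of Klein's inequality) and then apply sublinearity of the jointly convex, positively $1$-homogeneous $F(x,p)=x^2/(|x|+p)$ to the double sum -- note the weights $|c_{ji}|^2$ need not sum to one, which is precisely why the $1$-homogeneity matters, and your use of it is correct. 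The two mechanisms are equivalent (the sup representation is just the variational expression of the same convexity), so your version is a self-contained rederivation of what the paper gets by citation; your identifications of the multiplier eigenvalues $\frac{\ln((1-b_j)/b_j)}{1-2b_j}$ and of $\left((\Gamma-\Gamma^0)^2\right)_{jj}$ are also right. Be aware, however, that the genuine crux -- the scalar inequality itself -- is only asserted in the paper (``tedious, but elementary'') and only sketched in your proposal; the case-by-case calculus argument you outline is exactly the tedious part and still has to be carried out for the proof to be complete.
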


\begin{proof}
It is tedious, but elementary, to show that for real numbers $0<x,y<1$,
$$
x \ln\frac xy + (1-x) \ln \frac{1-x}{1-y} \geq \frac { \ln \frac{1-y}{y} } {1-2y} (x-y)^2 + \frac 13 \frac{ \left( x(1-x) - y(1-y) \right)^2}{\left| x(1-x) - y(1-y)\right| + y(1-y)}.
$$
Using joint convexity we see that
\begin{multline*}
 \frac{ \left( x(1-x) - y(1-y) \right)^2}{\left| x(1-x) - y(1-y)\right| + y(1-y)} \\= 4  \sup_{0<b<1} \left[ b(1-b) \left| x(1-x) - y(1-y)\right| - b^2 y(1-y) \right].
\end{multline*}
Let us replace on the right side the modulus $|a|$ by $\max\{a,-a\}$, and then use Klein's inequality \cite[Section 2.1.4]{thirring} for either of the expressions. This implies the result.
\end{proof}


\subsection{A priori estimates on $\alpha$}\label{sec:apriori}

We begin by briefly reviewing some facts about the translation-invariant case
$W\equiv 0$; see also Subsection \ref{sec:translinv}. Recall that $\Gamma^0$
denotes the minimizer of $\F$ in the translation-invariant case. It can be
written as $\Gamma^0=(1+e^{\beta H_{\Delta_0}^0})^{-1}$ with
$$
H_{\Delta_0}^0 = \left( \begin{array}{cc} -h^2 \nabla^2 -\mu 
& -\Delta_0 \\ -\Delta_0 &  h^2 \nabla ^2+\mu \end{array}\right) \,.
$$
Here $\Delta_0$ is the solution of the BCS gap-equation \eqref{BCSgapequ} and
$\beta^{-1}=T=T_c(1-Dh^2)$. Notice the distinction between $\Gamma^0$ and
$\Gamma_0$ which was defined in \eqref{gamma0}. The latter one, $\Gamma_0$,
contains the external potential $W$ and has no off-diagonal term.

Recall also that we denote the kernel of the off-diagonal entry
$\alpha^0=[\Gamma^0]_{12}$ by $\alpha^0((x-y)/h)$, which is explicitly
given in (\ref{def:a0}). From this explicit representation and the
fact that $\Delta_0 \leq C h$ we conclude, in particular, that
\begin{equation}\label{eq:alpha0}
\|\alpha^0\|_2^2 = \int_0^1 \,dy \int_\R \,dx \, |\alpha^0((x-y)/h)|^2 
= h \int_\R |\alpha^0(x)|^2 \,dx \leq C h \,.
\end{equation}
Moreover, the BCS gap-equation \eqref{BCSgapequ} is equivalent to
\begin{equation}\label{eq:bcsev}
(K_T^0(-ih\nabla) - ah \delta(x) )\alpha^0(x/h) = 0 \,.
\end{equation}
This implies, in particular, that 
\begin{equation}\label{eq:alpha0kin}
\Tr K^0_T(-ih\nabla) \alpha^0\overline{\alpha^0} - ah
|\alpha^0(0)|^2 =0 \,.
\end{equation}

Now we turn to the case of general $W$. Our goal in this subsection is to
prove that the $\alpha$ of \emph{any} low-energy state satisfies bounds similar
to \eqref{eq:alpha0} and \eqref{eq:alpha0kin}.

\begin{proposition}
Any admissible $\Gamma$ with $\F^{\rm BCS}(\Gamma) \leq \F^{\rm BCS}(\Gamma^0)$ satisfies
\begin{equation}
\label{eq:alpha}
\|\alpha\|_2^2 = \int_0^1 \,dx \int_\R \,dy \, |\alpha(x,y)|^2 \leq C h
\end{equation}
and
\begin{equation}
\label{eq:alphakin}
0\leq \Tr K^0_T(-ih\nabla) \alpha\overline{\alpha} - ah\int_{\calC} |\alpha(x,x)|^2\, dx \leq C h^{3} \,,
\end{equation}
where $\alpha=[\Gamma]_{12}$.
\end{proposition}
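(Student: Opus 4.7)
The plan is to exploit the key inequality \eqref{bsinequ} (itself a consequence of Lemma~\ref{lem:klein} applied to \eqref{eq12}) together with the hypothesis $\F^{\rm BCS}(\Gamma)\leq \F^{\rm BCS}(\Gamma^0)$, and to extract the three required bounds from the three lines of its right-hand side. Every summand there is manifestly nonnegative except the $W$-term $h^2\Tr W(\gamma-\gamma^0)$: the second line is $\geq 0$ because $K_T^0(-ih\nabla)-ah\delta(x-y)\geq 0$ (equivalent to the BCS gap equation \eqref{BCSgapequ}), and the third line is $\geq 0$ as a quotient of a square. To absorb the $W$-term I would use a weighted Cauchy-Schwarz,
\begin{equation*}
|h^2\Tr W(\gamma-\gamma^0)|\leq h^2\sqrt{\Tr W K_T^0(-ih\nabla)^{-1}W}\,\sqrt{\Tr K_T^0(-ih\nabla)(\gamma-\gamma^0)^2},
\end{equation*}
combined with the explicit translation-invariant computation
\begin{equation*}
\Tr W K_T^0(-ih\nabla)^{-1}W = \|W\|_{L^2(\calC)}^2\int_\R \frac{dp}{2\pi\,K_T^0(hp)} = \frac{\|W\|_{L^2(\calC)}^2}{ah},
\end{equation*}
where the last equality is \eqref{BCSgapequ} after the substitution $q=hp$. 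A final AM-GM step gives $\Tr K_T^0(\gamma-\gamma^0)^2\leq Ch^3$ and forces each of the second and third lines of \eqref{bsinequ} to be $\leq Ch^3$ as well.

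For \eqref{eq:alphakin} I would then unpack the second line of \eqref{bsinequ}: the zero-mode identity $(K_T^0(-ih\nabla)-ah\delta(x-y))\alpha^0((x-y)/h)=0$ makes the cross and reference terms in the quadratic form all vanish, so that the second line reduces to exactly $\Tr K_T^0(-ih\nabla)\alpha\overline{\alpha} - ah\int_\calC|\alpha(x,x)|^2\,dx$. Its nonnegativity is $K_T^0-ah\delta\geq 0$, and its $Ch^3$ upper bound is from the preceding step.

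For \eqref{eq:alpha}, I set $A:=\Tr[\gamma(1-\gamma)-\gamma^0(1-\gamma^0)]-\|\alpha\|_2^2+\|\alpha^0\|_2^2$ and $B:=\Tr[\gamma^0(1-\gamma^0)]-\|\alpha^0\|_2^2$. A direct computation in the translation-invariant case gives $B=O(1/h)$ (the Bogoliubov distribution is concentrated on the Fermi surface of width $\sim T_c$), and solving the resulting quadratic in the third-line estimate $A^2/(|A|+B)\leq Ch^3$ yields $|A|\leq Ch$. Independently I expand
\begin{equation*}
\Tr\gamma(1-\gamma)-\Tr\gamma^0(1-\gamma^0)=\Tr(\gamma-\gamma^0)(1-2\gamma^0)-\Tr(\gamma-\gamma^0)^2,
\end{equation*}
whose second term is $O(h^3)$ since $K_T^0\geq c_0>0$. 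For the first term I apply a weighted Cauchy-Schwarz, using the explicit form $1-2\tilde\gamma^0(hp)=((hp)^2-\mu)/K_T^0(hp)$ (from the $2\times 2$ spectral decomposition of $\Gamma^0$) to compute $\Tr(1-2\gamma^0)^2 K_T^0(-ih\nabla)^{-1}=h^{-1}\int_\R (q^2-\mu)^2 K_T^0(q)^{-3}\,dq/(2\pi)=O(1/h)$, so that $|\Tr(\gamma-\gamma^0)(1-2\gamma^0)|\leq Ch$. Combined with $|A|\leq Ch$ and $\|\alpha^0\|_2^2=O(h)$ from \eqref{eq:alpha0}, I conclude $\|\alpha\|_2^2\leq \|\alpha^0\|_2^2+Ch\leq Ch$. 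The delicate point is precisely this last step: the bridge between the first line of \eqref{bsinequ} (which controls $\gamma-\gamma^0$ in an $L^2$-sense) and the third line (which controls only the scalar $A$) is the weighted trace $\Tr(1-2\gamma^0)^2 K_T^0(-ih\nabla)^{-1}=O(1/h)$, whose finiteness relies on the decay $1-2\tilde\gamma^0(q)=(q^2-\mu)/K_T^0(q)=O(1/q^2)$ at infinity; without this translation-invariant identity the argument would not close.
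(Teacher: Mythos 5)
Your argument is correct and follows the same strategy as the paper: the basic inequality obtained by applying Lemma~\ref{lem:klein} to the representation of $\F^{\rm BCS}(\Gamma)-\F^{\rm BCS}(\Gamma^0)$, absorption of the $W$-term by a Cauchy--Schwarz with weight $K_T^0(-ih\nabla)^{-1}$ (your explicit evaluation $\Tr W K_T^0(-ih\nabla)^{-1}W=\|W\|_2^2/(ah)$ via \eqref{BCSgapequ} is a slightly sharper version of the paper's $\|W\|_\infty^2\Tr K_T^0(-ih\nabla)^{-1}\le Ch^{-1}$, same order), the zero-mode identity turning the second line into exactly the quantity in \eqref{eq:alphakin}, and the third line plus $\Tr[\gamma^0(1-\gamma^0)-\alpha^0\overline{\alpha^0}]=O(h^{-1})$ giving $|A|\le Ch$. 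The one place you genuinely deviate is the passage from $|A|\le Ch$ to \eqref{eq:alpha}: the paper bounds $|\Tr(\gamma-\gamma^0)|$ and $|\Tr(\gamma^2-(\gamma^0)^2)|$ separately by Schwarz with weights $h^{\pm2}$ (which also requires $\Tr(\gamma+\gamma^0)^2\le Ch^{-1}$), while you expand $\Tr[\gamma(1-\gamma)-\gamma^0(1-\gamma^0)]=\Tr(\gamma-\gamma^0)(1-2\gamma^0)-\Tr(\gamma-\gamma^0)^2$ and control the linear term by a single weighted Cauchy--Schwarz using $1-2\tilde\gamma^0(q)=(q^2-\mu)/K_T^0(q)$, so that $\Tr(1-2\gamma^0)^2K_T^0(-ih\nabla)^{-1}=O(h^{-1})$; this identity is correct and your variant is, if anything, a bit cleaner, at the price of invoking the explicit translation-invariant solution. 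One small gloss: the nonnegativity $K_T^0(-ih\nabla)-ah\delta\ge0$ is not literally ``equivalent'' to the gap equation; the gap equation gives the zero eigenvalue with eigenfunction $\alpha^0$, and one additionally uses (as the paper notes) that a delta potential produces at most one bound state, so that zero is indeed the bottom of the spectrum. With that remark inserted, your proof closes completely and yields the same bounds.
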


\begin{proof}
We divide the proof into two steps.

\emph{Step 1.}
Our starting point is the representation
\begin{align}\label{eq:repr}
\F^{\rm BCS}(\Gamma) - \F^{\rm BCS}(\Gamma^0) 
= \tfrac 1{2\beta} \, \H(\Gamma,\Gamma^0) + h^2 \Tr \gamma W - ah\int_{\calC}
|\alpha(x,x)- \alpha^0(0)|^2\, dx 
\end{align}
for any admissible $\Gamma$,  with the relative entropy $\H(\Gamma,\Gamma^0)$
defined in \eqref{eq:relent}. We note that $\Gamma^0$ is of the form $(1+e^{\beta
H^0})^{-1}$ with $H^0=H_{\Delta_0}^0$. We use Lemma~\ref{lem:klein} to bound
$\H(\Gamma,\Gamma^0)$ from below. Since $x\mapsto x/\tanh x$ is even, we can
replace $H^0_{\Delta_0}$ by its absolute value $E(-ih\nabla) = \sqrt{(-h^2\nabla^2 - \mu)^2
+ \Delta_0^2}$, and thus
\begin{align*}
& \Tr\left[ \frac {H^0_{\Delta_0}}{\tanh \tfrac \beta{2}  H^0_{\Delta_0}} \left(
\Gamma - \Gamma_0\right)^2\right] 
= \Tr \left[K^0_T(-ih\nabla)(\Gamma - \Gamma_0)^2 \right] \\
& \qquad \qquad = 2 \Tr K^0_T(-ih\nabla)(\gamma -\gamma^0)^2 
+ 2 \Tr K^0_T(-ih\nabla) (\alpha - \alpha^0) (\overline{\alpha - \alpha^0})
\end{align*}
with
$$
K^0_T(hp) = \frac {E(hp)}{\tanh \frac {\beta E(hp)}{2}}
$$
from \eqref{eq:kt0}. With the aid of  Lemma \ref{lem:klein} and the assumption
$\F^{\rm BCS}(\Gamma) \leq \F^{\rm BCS}(\Gamma^0)$ we obtain from
\eqref{eq:repr} the basic inequality 
\begin{align}\label{eq:basicineq}
0\geq 
& \Tr K^0_T(-ih\nabla)(\gamma -\gamma^0)^2 + h^2 \Tr \gamma W \notag \\
& + \Tr K^0_T(-ih\nabla) (\alpha - \alpha^0) (\overline{\alpha -
\alpha^0}) - ah\int_{\calC} |\alpha(x,x)- \alpha^0(0)|^2\, dx
\notag \\
&+ \frac T3 \frac{\left( \Tr\left[ \gamma(1-\gamma)-\gamma^0(1-\gamma^0) 
-\alpha\overline\alpha + \alpha^0 \overline{\alpha^0}\right]\right)^2}{ \left|
\Tr\left[ \gamma(1-\gamma)-\gamma^0(1-\gamma^0)  -\alpha\overline\alpha
+\alpha^0\overline{\alpha^0}\right]  \right| + \Tr\left[\gamma^0(1-\gamma^0) -
\alpha^0\overline{\alpha^0}\right]} \,.   
\end{align}
In the following step we shall derive the claimed a priori estimates on $\alpha$ from this inequality.

\emph{Step 2.}
We begin by discussing the first line on the right side of \eqref{eq:basicineq}. Using the fact that $\Tr W \gamma^0=0$ (since $W$ has mean value zero) and the Schwarz inequality we obtain the lower bound 
\begin{align}\label{eq:gammaap0}
& \Tr K^0_T(-ih\nabla)(\gamma -\gamma^0)^2 + h^2 \Tr W\gamma \notag \\
& \quad = \tfrac 12 \Tr K^0_T(-ih\nabla)(\gamma -\gamma^0)^2 + \tfrac 12 \Tr
K^0_T(\gamma -\gamma^0)^2 + h^2 \Tr W(\gamma - \gamma^0) \notag \\
& \quad \geq \tfrac 12 \Tr K^0_T(-ih\nabla)(\gamma -\gamma^0)^2 -  h^4\tfrac 12
\Tr W \left(K^0_T(-ih\nabla)\right)^{-1} W \notag \\
& \quad \geq \tfrac 12 \Tr K^0_T(-ih\nabla)(\gamma -\gamma^0)^2 -  C h^3 \,.
\end{align}
The last step used that $\Tr W \left(K^0_T(-ih\nabla)\right)^{-1} W\leq \|W\|_\infty^2 \Tr K^0_T(-ih\nabla)^{-1} \leq Ch^{-1}$.

Next, we treat the second line on the right side of \eqref{eq:basicineq}. Recall that the BCS gap equation in the form \eqref{eq:bcsev} says that the operator $K_T^0(-ih\nabla) - ah \delta(x)$ has an eigenvalue zero with eigenfunction $\alpha^0(x/h)$. Hence
\begin{align*}
& \Tr K^0_T(-ih\nabla) (\alpha - \alpha^0) (\overline{\alpha -
\alpha^0}) - ah\int_{\calC} |\alpha(x,x)- \alpha^0(0)|^2\, dx \\
&\quad = \Tr K^0_T(-ih\nabla) \alpha\overline\alpha - ah\int_{\calC}
|\alpha(x,x)|^2\, dx \,.
\end{align*}
Since a delta potential creates at most one bound state, zero must be the
ground state energy of $K_T^0(-ih\nabla) - ah \delta(x)$, and we deduce that
$$
\Tr K^0_T(-ih\nabla) \alpha\overline\alpha - ah\int_{\calC} |\alpha(x,x)|^2\, dx
\geq 0 \,.
$$
This information, together with \eqref{eq:gammaap0} and \eqref{eq:basicineq}, yields
\begin{equation}\label{eq:gammaap}
\Tr (1-h^2\nabla^2)(\gamma -\gamma^0)^2 \leq C \Tr K^0_T(-ih\nabla)(\gamma -\gamma^0)^2 \leq C h^{3} \,,
\end{equation}
\begin{equation}\label{eq:alphakin0}
\Tr K^0_T(-ih\nabla) \alpha\overline\alpha - ah\int_{\calC} |\alpha(x,x)|^2\, dx
\leq C h^{3}
\end{equation}
and
\begin{equation}\label{eq:frac}
\frac{\left( \Tr\left[ \gamma(1-\gamma)-\gamma^0(1-\gamma^0) 
-\alpha\overline\alpha + \alpha^0\overline{\alpha^0}\right]\right)^2}{ \left|
\Tr\left[ \gamma(1-\gamma)-\gamma^0(1-\gamma^0)  -\alpha\overline\alpha
+\alpha^0\overline{\alpha^0}\right]  \right| + \Tr\left[\gamma^0(1-\gamma^0) -
\alpha^0\overline{\alpha^0}\right]} \leq C h^3 \,.
\end{equation}
We know that $\Tr\left[\gamma^0(1-\gamma^0) - \alpha^0\overline{\alpha^0}
\right]\leq C h^{-1}$ from the explicit solution in the translation invariant
case, and therefore \eqref{eq:frac} yields
\begin{equation}\label{eq:frac1}
\left| \Tr\left[ \gamma(1-\gamma)-\gamma^0(1-\gamma^0)  -\alpha
\overline{\alpha} + \alpha^0\overline{\alpha^0} \right]  \right|
\leq Ch \,.
\end{equation}
In order to derive from this an a priori estimate on $\alpha$ we use \eqref{eq:gammaap} and the Schwarz inequality to bound
$$
\left| \Tr(\gamma-\gamma^0)\right| \leq h^{-2} \Tr
K_T^0(-ih\nabla)(\gamma-\gamma^0)^2 + h^2 \Tr
\left(K_T^0(-ih\nabla)\right)^{-1} \leq C h
$$
and
$$
\left| \Tr(\gamma^2 - (\gamma^0)^2) \right| =
\left|\Tr(\gamma-\gamma^0)(\gamma+\gamma^0)\right| \leq h^{-2}
\Tr(\gamma-\gamma^0)^2 + h^2\Tr (\gamma+\gamma^0)^2 \leq C h \,.
$$
Finally, since $\Tr\alpha^0\overline{\alpha^0} \leq C h$ (see
\eqref{eq:alpha0}), we conclude from \eqref{eq:frac1} that
$\Tr\alpha\overline{\alpha} \leq Ch$, as claimed.
\end{proof}


\subsection{Decomposition of $\alpha$}

Here we quantify in which sense $\alpha(x,y)$ is close to $\tfrac 12 \left(
\psi( x)+\psi(y)\right) \alpha^0(h^{-1}(x-y))$. There is one technical point
that we would like to discuss before stating the result. The asymptotic form
$\tfrac 12 \left( \psi( x)+\psi(y)\right) \alpha^0(h^{-1}(x-y))$ will allow us
in the next subsection to use the semiclassical results in a similar way as in the
proof of the upper bound. Our semiclassics, however, require $\psi$ to be in
$H^2$. While we naturally get an $H^1$ condition, the $H^2$ condition is
achieved by introducing an additional parameter $\epsilon>0$, which will later
chosen to go to zero as $h\to 0$.

\begin{proposition}\label{decomp}
Let $\Gamma$ be admissible with $\F^{\rm BCS}(\Gamma) \leq \F^{\rm BCS}(\Gamma^0)$. Then for every sufficiently small $\epsilon\geq h>0$, the operator $\alpha=[\Gamma]_{12}$ can be decomposed as
\begin{equation}\label{eq:alphadecomp}
\alpha(x,y) = \tfrac 12 \left( \psi( x)+\psi(y)\right)  \alpha^0(h^{-1}(x-y))  +
\sigma(x,y)
\end{equation}
with a periodic function $\psi\in H^2(\mathcal C)$ satisfying
\begin{equation}\label{eq:psi}
\|\psi\|_{H^1} \leq C\,,
\qquad
\|\psi\|_{H^2} \leq C\epsilon h^{-1}
\end{equation}
and with
\begin{equation}\label{eq:beta}
\|\sigma\|_{H^1}^2 \leq C\epsilon^{-2} h^3 \,.
\end{equation}
More precisely, one has $\sigma = \sigma_1 + \sigma_2$ with
\begin{equation}\label{eq:beta1}
\|\sigma_1\|_{H^1}^2 \leq C h^3
\end{equation}
and with $\sigma_2$ of the form
$$
\sigma_2(x,y) = \tfrac 12 \left(\tilde\psi( x)+\tilde\psi(y)\right)
\alpha^0(h^{-1}(x-y))\,,
$$
where the Fourier transform of $\tilde\psi$ supported in $\{|p|\geq\epsilon
h^{-1}\}$. The Fourier transform of $\psi$ is supported in $\{|p|<\epsilon
h^{-1}\}$.
\end{proposition}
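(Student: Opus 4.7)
My plan is to use the a priori bound \eqref{eq:alphakin0}, which says that the nonnegative quadratic form
$$Q[\alpha]:=\Tr K_T^0(-ih\nabla)\alpha\bar\alpha - ah\int_\calC|\alpha(x,x)|^2\,dx$$
is at most $Ch^3$. By the BCS equation \eqref{eq:bcsev}, $\alpha^0(\cdot/h)$ is the unique ground state of $K_T^0(-ih\nabla)-ah\delta$ on $L^2(\R)$ at eigenvalue $0$; together with $K_T^0(p)\ge 2T\asymp 1$ and the fact that a one-dimensional $\delta$-potential carries at most one bound state, this furnishes a uniform spectral gap $\kappa>0$ above $\alpha^0$. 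I will exploit this gap, together with a center-of-mass coercivity, to show that $\alpha$ is close to $\tfrac12(\psi(x)+\psi(y))\alpha^0(h^{-1}(x-y))$.

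Passing to center-of-mass coordinates $X=(x+y)/2$, $r=x-y$ and Fourier expanding in $X$ with $P\in 2\pi\Z$, symmetrization in $\partial_x\leftrightarrow\partial_y$ diagonalizes $Q=\sum_P\langle\hat\alpha_P,M_P\hat\alpha_P\rangle$, where
$$ M_P := \tfrac12\bigl(K_T^0(hP/2+h\hat k)+K_T^0(hP/2-h\hat k)\bigr)-ah\delta(r),\qquad \hat k=-i\partial_r, $$
acts on even functions of $r$. For $|hP|\ll 1$, Taylor expansion gives $M_P=(K_T^0(h\hat k)-ah\delta)+\tfrac{(hP)^2}{8}(K_T^0)''(h\hat k)+O((hP)^4)$; first order perturbation theory around $\alpha^0$ then places the lowest eigenvalue at $ch^2P^2+O(h^4P^4)$ with the gap above still $\ge\kappa/2$. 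For $|hP|\gtrsim 1$ the pointwise bound $K_T^0(p)\ge c(1+p^2)$ yields $M_P\ge ch^2P^2$ directly. Decomposing $\hat\alpha_P=c_P\tilde\alpha^0_P+\hat\alpha^{(P)}_\perp$ orthogonally and setting $\psi_0(X):=\sum_Pc_P e^{iPX}$, suitably normalized so that $\psi_0$ is of order one given $\|\alpha^0(\cdot/h)\|_{L^2}^2\asymp h$, the blockwise coercivity versus $Q[\alpha]\le Ch^3$ gives $\|\psi_0\|_{H^1(\calC)}\le C$ and $\|\alpha_\perp\|_{L^2}^2\le Ch^3$; the latter is promoted to an $H^1$ bound using $K_T^0(hk)\ge c(1+h^2k^2)$.

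To produce the claimed form and the $H^2$ bound, I then take $\psi:=\Pi_{|p|<\epsilon/h}\psi_0$ and $\tilde\psi:=\Pi_{|p|\ge\epsilon/h}\psi_0$, so that $\|\psi\|_{H^1}\le C$, $\|\psi\|_{H^2}\le C\epsilon/h$ from the cutoff, and $\|\tilde\psi\|_{L^2}^2\le C\epsilon^{-2}h^2$ by Plancherel. With $\sigma_2:=\tfrac12(\tilde\psi(x)+\tilde\psi(y))\alpha^0(h^{-1}(x-y))$, the scalings $\|\alpha^0(\cdot/h)\|_{L^2}^2\asymp h$ and $\|(\alpha^0)'(\cdot/h)\|_{L^2}^2\asymp h$ yield $\|\sigma_2\|_{H^1}^2\le C\epsilon^{-2}h^3$. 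The residual $\sigma_1:=\alpha-\tfrac12(\psi(x)+\psi(y))\alpha^0(h^{-1}(x-y))-\sigma_2$ decomposes into $\alpha_\perp$ plus two symmetrization errors of the form $[\phi(X)-\tfrac12(\phi(x)+\phi(y))]\alpha^0(h^{-1}(x-y))$ with $\phi=\psi,\tilde\psi$; the Fourier identity $\phi(X)-\tfrac12(\phi(x)+\phi(y))=\sum_p\hat\phi(p)(1-\cos(pr/2))e^{ipX}$ together with $\int(1-\cos(pr/2))^2|\alpha^0(r/h)|^2\,dr\le C\min(h^5p^4,h)$ bounds both errors by $Ch^3$, giving $\|\sigma_1\|_{H^1}^2\le Ch^3$.

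The central obstacle is the uniform blockwise coercivity $E_0(P)\ge c\min(h^2P^2,1)$ together with a persistent spectral gap $E_1(P)-E_0(P)\ge\kappa/2$ valid uniformly in $P$ as $h\to 0$. The perturbative regime $|hP|\ll 1$ needs quantitative first order perturbation theory for $M_P$ around $M_0$ with an explicit remainder, while interpolating with the direct symbol estimate for $|hP|\gtrsim 1$ requires care to close the gap between the two regimes. Once this coercivity is established, the orthogonal splitting, the Fourier cutoff, and the Fourier computation of the symmetrization errors reduce to essentially routine calculations.
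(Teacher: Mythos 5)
Your overall architecture is close to the paper's: the same a priori bounds, the spectral gap of $K^0_T(-ih\nabla)-ah\delta$ with $\alpha^0$ as exact zero mode, a center-of-mass Fourier decomposition to get coercivity in the $X$ variable, the frequency cutoff at $\epsilon h^{-1}$, and Fourier estimates for the symmetrization errors. The problem is that the step you yourself call the central obstacle --- the uniform blockwise coercivity $E_0(P)\geq c\min(h^2P^2,1)$ --- is precisely the nontrivial content of the proposition (it is the paper's Lemma \ref{lem:P}), and your sketch does not establish it. Perturbation theory in $|hP|\ll 1$ gives $E_0(P)=\kappa_2\,h^2P^2+O(h^4P^4)$ with $\kappa_2$ equal, to leading order, to an expectation value built from $(K_T^0)''$ minus a nonnegative second-order correction; the strict positivity of $\kappa_2$, uniformly as $h\to0$, is exactly what has to be proved and is not evident ($(K_T^0)''$ changes sign, e.g.\ near $q=0$ when $\mu>0$). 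In addition, the uniform gap $E_1(P)-E_0(P)\geq\kappa/2$ ``uniformly in $P$'' that you invoke is both stronger than needed and doubtful for $|hP|\gtrsim 1$, where a bound state of $M_P$ need not persist below the essential spectrum; for those modes the symbol bound $M_P\geq ch^2P^2\geq c$ lets you put the whole mode into the remainder, so no gap statement should be asserted there. Finally, if your orthogonal splitting uses the $P$-dependent ground state of $M_P$, the main term has a $P$-dependent profile, and identifying it with $\psi_0(X)\,\alpha^0(h^{-1}(x-y))$ requires a quantitative bound on $\tilde\alpha^0_P-\alpha^0(\cdot/h)$ that you do not supply; if instead you keep the fixed profile $\alpha^0$, cross terms $\langle(M_P-M_0)\hat\alpha^0,\hat\alpha^{(P)}_\perp\rangle$ appear and must be controlled. (Also note the $L^2$ bound on $\psi_0$ needs the a priori estimate \eqref{eq:alpha}, not only $Q[\alpha]\leq Ch^3$, since $E_0(P)$ vanishes at $P=0$.)

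The paper circumvents all of this without perturbation theory: it projects fiberwise in $y$ onto the exact zero mode $\alpha^0(h^{-1}(\cdot-y))$ of $K_T^0(-ih\nabla)-ah\delta(\cdot-y)$, so the order-one gap immediately gives $\|\sigma_0\|_2^2\leq Ch^3$, and it obtains the center-of-mass coercivity as Lemma \ref{lem:P}: one bounds $K_T^0(-ih\nabla\pm hp/2)-ah\delta\geq \kappa\, e^{\pm ihxp/2}\left(1-|\phi_0\rangle\langle\phi_0|\right)e^{\mp ihxp/2}$ and computes the bottom of the resulting rank-two problem, which yields a lower bound of the form $c\,\kappa\int|\phi_0(x)|^2\left(1-\cos(hpx)\right)dx\geq c' h^2p^2$ for $hp$ small, the regime $hp\gtrsim1$ following from $K_T^0\geq \const(1+h^2(-i\nabla\pm p/2)^2)$. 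If you import this lemma (which is exactly your missing coercivity in operator form, $M_P\geq c\,h^2P^2$ on even functions), the routine Taylor bound $\|M_P-M_0\|_\infty\leq Ch^2P^2$ (uniform in $h$ because $K_T^0$ is a smooth function of $(q^2-\mu)^2+\Delta_0^2$, even though $\Delta_0\sim h$) handles the cross terms and the stability of the gap for $h^2P^2$ small, and the rest of your plan --- cutoff, $\sigma_2$, symmetrization errors --- does go through. One smaller omission: \eqref{eq:beta1} is an $H^1$ bound, so besides your $L^2$ computation with $1-\cos(pr/2)$ you also need the gradient part of the symmetrization error, which uses $\int x^2|(\alpha^0)'(x)|^2\,dx<\infty$, i.e.\ the smoothness of $\widehat{\alpha^0}\propto (K_T^0)^{-1}$, as in Step 3 of the paper's proof.
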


We recall that the $H^1$ norm of an operator was introduced in \eqref{eq:h1op}.

\begin{proof}

\emph{Step 1.}
We can write \eqref{eq:alphakin} as
\begin{equation}\label{eq:alphakin2}
0\leq \int_{\calC} \langle \alpha(\,\cdot\,,y)  | K^0_T (-ih \nabla)- ah \delta (\cdot\,-y) | \alpha(\,\cdot\,,y)\rangle \, dy \leq C h^3 \,.
\end{equation}
Here, the operator $K^0_T(-ih\nabla)$ acts on the $x$ variable of $\alpha(x,y)$, and $\langle\, \cdot \, | \, \cdot \,\rangle$ denotes the standard inner product on $L^2(\R)$.

Now we recall that the operator $ K^0_{T} - ah \delta(\cdot\,-y)$ on $L^2(\R)$ has a unique ground state, proportional to $\alpha^0(h^{-1}(\cdot\,-y))$, with ground state energy zero. There are no further eigenvalues and the bottom of its essential spectrum is $\Delta_0/\tanh\left[\frac{\Delta_0}{2T}\right]\geq 2T$. In particular, there is a lower bound, independent of $h$, on the gap. We write 
\begin{equation}\label{eq:psi0def}
\psi_0(y) =  \left( h \int_{\R^d} |\alpha^0(x)|^2 \, dx  \right)^{-1} \int_{\R^d} \alpha^0(h^{-1}(x-y)) \alpha(x,y) \,dx
\end{equation}
and decompose
$$\alpha(x,y)=\psi_0(y) \alpha^0(h^{-1}(x-y))  + \sigma_0(x,y) \,.
$$
Then \eqref{eq:alphakin2} together with the uniform lower bound on the gap of
$K^0_{T} - ah\delta (\,\cdot\,-y)$ yields the bound $\|\sigma_0\|_2^2 \leq C
h^3$. We can also symmetrize and write
\begin{equation}\label{eq:beta1def}
\sigma_1(x,y) = \sigma_0(x,y) + \tfrac 12 \left(\psi(x) - \psi(y)\right)
\alpha^0(h^{-1}(x-y)) \,.
\end{equation}
Then
\begin{equation}\label{betas}
\alpha(x,y) = \tfrac 12 \left( \psi_0( x)+\psi_0(y)\right) \alpha^0(h^{-1}(x-y)) 
+ \sigma_1(x,y)
\end{equation}
again with
\begin{equation}\label{eq:beta1proof}
\|\sigma_1\|_2^2\leq C h^3 \,.
\end{equation}
This proves the first half of \eqref{eq:beta1}. Before proving the second half in Step 4 below we need to study $\psi$.

\emph{Step 2.} We claim that
\begin{equation}\label{eq:psi0}
\int_\calC |\psi_0(x)|^2 \,dx \leq C
\end{equation}
and
\begin{equation}\label{eq:psi0kin}
\int_\calC |\psi_0'(x)|^2 \,dx \leq C \,.
\end{equation}
The first inequality follows by Schwarz's inequality
$$
\int_\calC | \psi_0(x)|^2\, dx \leq \frac{\|\alpha\|_2^2}{h \int_{\R} |\alpha^0(x)|^2 dx }
$$
and our bounds \eqref{eq:alpha} and \eqref{eq:alpha0}. In order to prove \eqref{eq:psi0kin} we use again Schwarz's inequality, 
\begin{equation}\label{schw2}
\int_\calC |\psi_0'(x)|^2\, dx \leq  \frac {\int_{\R\times \calC}  \left|
\left(\nabla_x + \nabla_y\right) \alpha(x,y)\right|^2 \, dx\,dy }{h \int
|\alpha^0(x)|^2dx } \,.
\end{equation}
Lemma \ref{lem:P} below bounds the numerator by a constant times
$$
h^{-2} \int_{\calC} \langle \alpha(\,\cdot\,,y) | K_{T}^{0}(-ih\nabla)  - ah \delta(\cdot\,-y) | \alpha(\,\cdot\,,y)\rangle \, dy \,,
$$
and therefore \eqref{eq:psi0kin} is a consequence of \eqref{eq:alphakin2} and \eqref{eq:alpha0}.

\emph{Step 3.}
Next, we establish the remaining bound $\|\nabla\sigma_1\|^2_{2}\leq Ch$ in
\eqref{eq:beta1}. We use formula \eqref{eq:beta1def} for $\sigma_1$. First of
all, using the fact that $K(-ih\nabla)\geq c(1-h^2\nabla^2)$ one easily deduces
from \eqref{eq:alphakin2} that $\|\nabla \sigma_0\|_2^2\leq C h$. Moreover,
because of \eqref{eq:alpha0} and \eqref{eq:psi0kin}
$$
\int_\calC |\psi_0'(x)|^2 \int_{\R} |\alpha^0(h^{-1}(x-y))|^2 dx\, dy \leq C h \,.
$$
Finally,
\begin{align*}
&h^{-2} \int_{\calC\times\R} |\psi_0(x)-\psi_0(y)|^2 |(\alpha^0)'(h^{-1}(x-y))|^2\,dx \, dy \\ 
& = h^{-1} 4 \sum_{p\in 2\pi\Z} |\hat \psi_0(p)|^2 \int_{\R} \left|(\alpha^0)'(x)\right|^2 \sin^2\left( \tfrac 12 h p x \right) \, dx \\
& \leq h \sum_{p\in 2\pi\Z}|p|^2 |\hat \psi_0(p)|^2 \int_{\R} \left|(\alpha^0)'(x)\right|^2 x^2 \, dx 
\leq C h\,,
\end{align*}
where we used \eqref{eq:psi0kin} and the fact that $\int_\R \left|(\alpha^0)'(x)\right|^2 x^2 dx$ is finite. This is a simple consequence of the fact that the Fourier transform of $\alpha^0$
is given by the smooth function $\frac{\Delta_0}{2(2\pi)^{1/2}} \left(K_T^0\right)^{-1}$. 
This completes the proof of \eqref{eq:beta1}.

\emph{Step 4.}
Finally, for each $\epsilon\geq h$ we decompose $\psi_0 = \psi+\tilde\psi$,
where the Fourier transforms of $\psi$ and $\tilde\psi$ are supported in $\{|p|<
\epsilon h^{-1} \}$ and $\{|p|\geq \epsilon h^{-1} \}$, respectively. Clearly,
the bounds \eqref{eq:psi0} and \eqref{eq:psi0kin} imply \eqref{eq:psi}.

Moreover, $\|\tilde\psi\|_2 \leq C \epsilon^{-1} h$ and $\|\tilde\psi'\|_2
\leq C$, and hence
$$
\sigma_2(x,y)= \tfrac 12 \left(\tilde\psi( x)+\tilde\psi(y)\right)
\alpha^0(h^{-1}(x-y))
$$
satisfies $\|\sigma_2\|_{H^1}^2 \leq h^{3} \epsilon^{-2}$. This completes the
proof
of the proposition.
\end{proof}

In the previous proof we made use of the following

\begin{Lemma}\label{lem:P} For some constant $C>0$,
\begin{align}\nonumber
& h^2 \int_{\R\times \calC}  \left| \left(\nabla_x + \nabla_y\right) \alpha(x,y)\right|^2 \, dx\,dy 
\\ & \leq C \int_{\calC} \langle \alpha(\,\cdot\,,y) | K_{T}^{0}(-ih\nabla)  - ah \delta(\cdot\,-y) | \alpha(\,\cdot\,,y)\rangle \, dy \label{lem:eq:com}
\end{align}
for all periodic and symmetric $\alpha$ (i.e., $\alpha(x,y)=\alpha(y,x)$).
\end{Lemma}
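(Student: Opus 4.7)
The plan is to Fourier-decompose $\alpha$ in the center-of-mass coordinate, which diagonalizes both sides of \eqref{lem:eq:com} and reduces the inequality to a one-parameter family of operator inequalities in the relative coordinate. With $u=(x+y)/2$ and $v=x-y$, the periodicity $\alpha(x+1,y+1)=\alpha(x,y)$ and the symmetry $\alpha(x,y)=\alpha(y,x)$ allow the expansion $\alpha(x,y)=\sum_{P\in 2\pi\Z} e^{iPu}\phi_P(v)$ with each $\phi_P$ an even function of $v$. Using $\nabla_x+\nabla_y=2\partial_u$, the intertwining $-ih\nabla_x(e^{iPu}\phi_P(v))=e^{iPu}(hP/2-ih\partial_v)\phi_P(v)$, and the orthogonality $\int_0^1 e^{i(P-P')y}\,dy=\delta_{PP'}$ on $2\pi\Z$, both sides of \eqref{lem:eq:com} split as sums over $P$. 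It therefore suffices to prove, uniformly in $P\in 2\pi\Z$ and for every even $\phi\in L^2(\R)$,
\[ h^2 P^2 \|\phi\|^2 \;\le\; C\,\langle \phi, A_P\phi\rangle, \qquad A_P := K_T^0\bigl(h(-i\partial_v + P/2)\bigr) - ah\,\delta(v). \]

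On even functions, $K_T^0(h(-i\partial_v+P/2))$ is represented in Fourier by the symmetrized symbol $\bar K_P(p) := \tfrac12[K_T^0(h(P/2+p))+K_T^0(h(P/2-p))]$. I would then apply Cauchy--Schwarz to $|\phi(0)|^2=|\int\hat\phi\,dp/(2\pi)|^2$ with the weight $\bar K_P^{-1}$, which gives
\[ \langle \phi, A_P\phi\rangle \;\ge\; (1-ah\,I_P)\,\langle\phi,\bar K_P\phi\rangle, \qquad I_P := \int\frac{dp/(2\pi)}{\bar K_P(p)}. \]
The BCS gap equation \eqref{eq:bcsev}, written as $ah\int dp/(2\pi K_T^0(hp))=1$, combined with the elementary identity $\tfrac12(1/a+1/b)-2/(a+b) = (a-b)^2/(2ab(a+b))$ and the translation invariance of Lebesgue measure, then yields the exact defect formula
\[ 1-ah\,I_P \;=\; ah\int \frac{\bigl(K_T^0(h(P/2+p))-K_T^0(h(P/2-p))\bigr)^2}{4\,K_T^0(h(P/2+p))\,K_T^0(h(P/2-p))\,\bar K_P(p)}\,\frac{dp}{2\pi} \;\ge\; 0. \]

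The main obstacle is the quantitative lower bound $(1-ah I_P)\,\bar K_P(p)\ge c\,h^2 P^2$ (pointwise in $p$, uniformly in $P\in 2\pi\Z$); this reflects the fact that the simple zero-energy ground state $\alpha^0(v/h)$ of $A_0$ is even, so lifting it into a sector of nontrivial center-of-mass momentum costs energy of order $h^2 P^2$. I would check this in two regimes. For $h|P|\le 1$, a Taylor expansion of the numerator in the defect formula (with leading term $(hP\,K_T^{0\,\prime}(hp))^2$) and the substitution $q=hp$ give $1-ah I_P\ge c\,h^2P^2$, while $\bar K_P(p)\ge c_0>0$ uniformly, thanks to the continuity and strict positivity of $K_T^0$. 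For $h|P|\ge 1$, the elementary pointwise inequality $K_T^0(q)\ge c(q^2+1)$ together with the identity $(P/2+p)^2+(P/2-p)^2=P^2/2+2p^2$ yields $\bar K_P(p)\ge c(h^2P^2/4+h^2p^2+1)$; in particular $I_P\lesssim (hP)^{-1}$, so $1-ah I_P$ is bounded away from zero, while $\bar K_P\gtrsim h^2P^2$. In both regimes the product is bounded below by a constant times $h^2 P^2$, which after integration against $|\hat\phi|^2$ gives the reduced operator inequality and hence Lemma \ref{lem:P}.
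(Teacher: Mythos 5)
Your proposal uses the same Fourier reduction to a one-parameter family of operator inequalities in the relative coordinate, but it then diverges from the paper's route in an interesting and legitimate way. The paper lower-bounds the symmetrized operator by $\kappa\bigl[e^{ihxp/2}(1-|\phi_0\rangle\langle\phi_0|)e^{-ihxp/2}+e^{-ihxp/2}(1-|\phi_0\rangle\langle\phi_0|)e^{ihxp/2}\bigr]$, where $\kappa$ is the spectral gap of $K_T^0(-ih\nabla)-ah\delta$ above its zero ground state, and then computes the smallest eigenvalue of the resulting $2\times2$ Gram matrix $2-|f\rangle\langle f|-|g\rangle\langle g|$; the $h^2p^2$ comes from $\int|\phi_0|^2(1-\cos(hpx))\,dx$. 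You instead run Cauchy--Schwarz on the $\delta$ term with weight $\bar K_P^{-1}$ to get $\langle\phi,A_P\phi\rangle\geq(1-ahI_P)\langle\phi,\bar K_P\phi\rangle$, and then you make the BCS gap equation do the work through an algebraic ``defect formula'' for $1-ahI_P$. This is a more computational but arguably more explicit route: you don't need to invoke the spectral gap of the $\delta$-interaction operator directly, and you get a closed-form nonnegative integrand whose small-$s$ expansion gives the $h^2P^2$. The paper's argument buys a cleaner uniform estimate via a single rank-one lower bound; yours buys explicitness and avoids the $2\times2$ eigenvalue step.

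Two small points you should tighten. First, with $u=(x+y)/2$ one has $\nabla_x+\nabla_y=\partial_u$, not $2\partial_u$; this only rescales the constant $C$, so it is harmless, but it is worth correcting. Second, your two-regime split ``$h|P|\le1$'' versus ``$h|P|\ge1$'' leaves the matching uncontrolled: the bound $ahI_P\lesssim(h|P|)^{-1}$ does not by itself give $1-ahI_P\ge\tfrac12$ at $h|P|=1$, so there is a genuine (though easily fillable) gap in the intermediate range. The clean fix is to observe that the defect $\Phi(s):=1-ahI_P$ (with $s=hP/2$) depends on $h$ and $P$ only through $s$, is continuous, nonnegative, vanishes only at $s=0$, satisfies $\Phi(s)\sim cs^2$ near $0$ by your Taylor expansion, and $\Phi(s)\to1$ as $|s|\to\infty$; hence $\Phi(s)\ge c\min(s^2,1)$ uniformly. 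Combined with $\bar K_P\ge\kappa$ always and $\bar K_P\ge c(h^2P^2/4+1)$, this gives $(1-ahI_P)\,\bar K_P(p)\ge c\,h^2P^2$ uniformly in $p$ and $P$, which is what you need. With those repairs, the proof is correct.
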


\begin{proof}
By expanding $\alpha(x,y)$ in a Fourier series
\begin{equation}
\alpha(x,y) = \sum_{ p \in 2\pi \Z} e^{i p (x+y)/2} \alpha_p(x-y)
\end{equation}
and using that $\alpha_p(x) = \alpha_p(-x)$ for all $p \in 2\pi \Z$ 
we see that (\ref{lem:eq:com}) is equivalent to 
\begin{equation}\label{eq:combound}
K_{T}^{0} (-ih\nabla + hp/2) + K_{T}^{0}(-ih\nabla -hp/2) - 2 a \, \delta(x/h) \geq \frac 2 C h^2 p^2
\end{equation}
for all $p\in 2\pi \Z$. This inequality holds for all $p\in \R$, in fact, for
an appropriate choice of $C>0$, as we shall now show.

Since $K_{T}^{0} \geq \const (1+ h^2(-i\nabla + p/2)^2)$, it suffices to
consider the case of $h p$ small. If
$\kappa=\Delta_0/\tanh\left[\frac{\Delta_0}{2T}\right] \geq 2T$ denotes the gap
in the spectrum of $K^{0}_{T}(-ih\nabla) - ah\delta$ above zero, and
$h^{-1/2}\phi_0(x/h)$ its normalized ground state, proportional to
$\alpha^0(x/h)$,
\begin{align}\nn
& K_{T}^{0} (-ih\nabla + hp/2)+ K_{T}^{0} (-ih\nabla - hp/2)-2 a h 
\delta \\ \nn & \geq  \kappa\left[ e^{i hxp/2}\left( 1 -
|\phi_0\rangle\langle\phi_0|\right) e^{-ihxp/2}+ e^{-ih xp/2}\left( 1 -
|\phi_0\rangle\langle\phi_0|\right) e^{ihxp/2} \right] \\ & \geq \kappa\left[ 1
- \left| \int |\phi_0(x)|^2 e^{-ihxp} dx\right|  \right]\,.
\end{align}
In order to see the last inequality, simply rewrite the term as $\kappa(2 - |f\rangle\langle f| -|g\rangle\langle g|)$, where $|\langle f | g \rangle|^2 = \left| \int |\phi_0(x)|^2 e^{-ihxp} dx\right|$, and compute the smallest eigenvalue of the corresponding $2 \times 2$ matrix. Since $\phi_0$ is reflection symmetric, normalized and satisfies $\int x^2 |\phi_0|^2\,dx<\infty$ (see Step 4 in the proof of Proposition \ref{decomp}), we have
$$
1 - \left| \int |\phi_0(x)|^2 e^{-ihxp} dx\right| 
= \int |\phi_0(x)|^2 \left(1-\cos(hpx)\right) dx
\geq c h^2 p^2 \,.
$$
This completes the proof of \eqref{eq:combound}.
\end{proof}


\subsection{The lower bound}

Pick a $\Gamma$ with $\F^{\rm BCS}(\Gamma)\leq \F^{\rm BCS}(\Gamma^0)$ and
let $\psi$ be as in Proposition \ref{decomp} (depending on some parameter
$\epsilon\geq h$ to be chosen later). As before we let $\Delta(x) = -\psi(x)
\Delta_0$ and define $H_\Delta$ by \eqref{hdelta}. We also put $\Gamma_\Delta =
(1+ \exp(\beta H_\Delta))^{-1}$.

Our starting point is the representation
\begin{align}  \label{eq1}
& \F^{\rm BCS}(\Gamma) - \F^{\rm BCS}(\Gamma_0) = -\frac T 2  \Tr\left[ \ln(1+e^{-\beta H_\Delta})-\ln(1+e^{-\beta H_0})\right] \nonumber\\
 & \qquad\qquad + \frac T2 \, \H(\Gamma,\Gamma_\Delta) 
 + \Delta_0 \re \int_{\calC} \overline{\psi(x)} \alpha(x,x) dx - h a
\int_{\calC} |\alpha(x,x)|^2 dx \,.
\end{align}
(Compare with \eqref{eq:repr}.) According to the 
decomposition \eqref{eq:alphadecomp} which, in view of the BCS gap equation
\eqref{BCSgapequ}, reads on the diagonal
$$
\alpha(x,x) = \psi(x) \alpha^0(0) + \sigma(x,x) = \frac{\Delta_0\psi(x)}{2ah}+
\sigma(x,x) \,,
$$ 
we can obtain the lower bound
\begin{align*}
\F^{\rm BCS}(\Gamma) - \F^{\rm BCS}(\Gamma_0) \geq & -\frac T 2  \Tr\left[
\ln(1+e^{-\beta H_\Delta})-\ln(1+e^{-\beta H_0})\right] \nonumber\\
 & + \frac T2 \, \H(\Gamma,\Gamma_\Delta) - ah \int_{\calC}
|\sigma(x,x)|^2\,dx \,.
\end{align*}
For the first two terms on the right side we apply the semiclassics from
Theorem~\ref{sc}. Arguing as in the proof of the upper bound and taking into
account the bounds on $\psi$ from Proposition \ref{decomp} we obtain
\begin{align}\label{eq:lowersc}
\F^{\rm BCS}(\Gamma) - \F^{\rm BCS}(\Gamma_0) 
\geq & h^3 \left(\E^{\rm GL}(\psi) - b_3 \right) - C \epsilon^2 h^{3}
\nonumber \\
& + \tfrac T2 \, \H(\Gamma,\Gamma_\Delta) - ah \int_{\calC} |\sigma(x,x)|^2\,dx
\,.
\end{align}
Our final task is to bound the last two terms from below. In the remainder of
this subsection we shall show that
\begin{equation}\label{eq:remupper}
\tfrac T2 \, \H(\Gamma,\Gamma_\Delta) - ah \int_{\calC} |\sigma(x,x)|^2\,dx
\geq -C \left( \epsilon h^3 + \epsilon^{-2} h^4 \right)  \,.
\end{equation}
The choice $\epsilon=h^{1/3}$ will then lead to
$$
\F^{\rm BCS}(\Gamma) - \F^{\rm BCS}(\Gamma_0) 
\geq h^3 \left(E^{\rm GL} - b_3 \right) - C h^{3+1/3} \,,
$$
which is the claimed lower bound.

In order to prove \eqref{eq:remupper} we again use the lower bound on the
relative entropy from Lemma~\ref{lem:klein} to estimate
\begin{equation}\label{eq:klein0}
 T \, \H(\Gamma,\Gamma_\Delta) \geq  \Tr\left[ \frac{H_\Delta}{\tanh \tfrac 1{2T
} H_\Delta}\left( \Gamma-\Gamma_\Delta\right)^2 \right] \,.
\end{equation}
The next lemma will allow us to replace the operator $H_\Delta$ in this bound
by $H_0$.

\begin{Lemma}
 There is a constant $c>0$ such that for all sufficiently small $h>0$
\begin{equation}\label{eq:hdeltah0}
\frac {H_\Delta}{\tanh \tfrac 1{2 T} H_\Delta} 
\geq (1-ch) \, K^0_T(-ih\nabla)\otimes \id_{\C^2}  \,.
\end{equation}
\end{Lemma}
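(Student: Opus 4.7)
The key analytical tool is the Mittag--Leffler expansion
\[
\frac{z}{\tanh(\beta z/2)}=2T+\sum_{n\ge 1}\frac{4Tz^2}{z^2+\omega_n^2},\qquad \omega_n:=2\pi nT,
\]
which writes $z/\tanh(\beta z/2)=\Phi(z^2)$ with $\Phi(y):=2T+\sum_n 4Ty/(y+\omega_n^2)$; each summand $y\mapsto 1-\omega_n^2/(y+\omega_n^2)$ is operator monotone on the positive operators, hence so is $\Phi$. Since $H_\Delta/\tanh(\beta H_\Delta/2)=\Phi(H_\Delta^2)$ and $K_T^0(-ih\nabla)=\Phi(E^0(-ih\nabla)^2)$ with $E^0(p)^2=(p^2-\mu)^2+\Delta_0^2$, the lemma amounts to $\Phi(H_\Delta^2)\ge(1-ch)\,\Phi(E^0(-ih\nabla)^2)\otimes\id_{\C^2}$. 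A short scalar computation gives $y\Phi'(y)/\Phi(y)=\tfrac12-\tfrac{\beta\sqrt y/2}{\sinh(\beta\sqrt y)}\in[0,\tfrac12]$ (from $\sinh t\ge t$), which integrates to $\Phi(\alpha y)\ge\sqrt{\alpha}\,\Phi(y)\ge\alpha\,\Phi(y)$ for $\alpha\in[0,1]$. Together with operator monotonicity of $\Phi$, this reduces the lemma to an operator inequality of the form $H_\Delta^2\ge(1-c'h)^2\,E^0(-ih\nabla)^2\otimes\id_{\C^2}$, with residual errors near the BCS shell $E^0(p)^2\sim\Delta_0^2=O(h^2)$ absorbed by the flattening $\Phi\ge 2T$.

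To prove this operator comparison I would compute $H_\Delta^2$ via the Pauli decomposition $H_\Delta=T_h\otimes\sigma_z+\re\Delta\otimes\sigma_x-\im\Delta\otimes\sigma_y$, obtaining
\[
H_\Delta^2=\bigl(T_h^2+|\Delta|^2\bigr)\otimes\id_{\C^2}+M,\qquad M\ge-\bigl|[T_h,\Delta]\bigr|\otimes\id_{\C^2}
\]
(the latter because $\sigma_x,\sigma_y$ have spectrum $\{\pm 1\}$), where $[T_h,\Delta]=-h^2(2\Delta'\nabla+\Delta'')=h^2\Delta_0(2\psi'\nabla+\psi'')$. The diagonal part is controlled by AM--GM on the cross-term $h^2\{T_0,W\}\ge-h^3T_0^2-h\|W\|_\infty^2$ in $T_h^2=T_0^2+h^2\{T_0,W\}+h^4W^2$, yielding $T_h^2+|\Delta|^2\ge(1-h)T_0^2-Ch$ together with $|\Delta|^2\ge 0$ and $\Delta_0^2=O(h^2)$.

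The main obstacle is the off-diagonal commutator bound. Using $\Delta_0=O(h)$ and the one-dimensional Sobolev embedding $H^1(\calC)\hookrightarrow L^\infty(\calC)$, which provides $\|\psi'\|_\infty\le C\|\psi\|_{H^2}$, a Young-type inequality
\[
\bigl|[T_h,\Delta]\bigr|\le\lambda\,(-h^2\nabla^2+1)+\lambda^{-1}\,h^4\Delta_0^2\bigl(\|\psi'\|_\infty^2+\|\psi''\|_2^2\bigr)
\]
with $\lambda$ of order $h$ absorbs the commutator into an $O(h)$-correction relative to $E^0(-ih\nabla)^2\otimes\id_{\C^2}$. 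The delicate point is uniformity of this absorption across the spectrum --- especially in the intermediate regime $|p^2-\mu|\sim h$ where the commutator correction and $E^0(p)^2$ are both of size $O(h^2)$ --- and I expect to treat it via a Matsubara-frequency-by-frequency analysis using the resolvent identity applied to $(H_\Delta^2+\omega_n^2)^{-1}-(E^0(-ih\nabla)^2+\omega_n^2)^{-1}\otimes\id$, whose sum over $n$ converges because $\sum_n\omega_n^{-2}<\infty$. Assembling the diagonal and off-diagonal estimates then produces the operator inequality and thereby the lemma.
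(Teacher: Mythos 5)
You start from the same key fact as the paper: the partial-fraction expansion shows that $\Phi(y)=\sqrt{y}/\tanh(\tfrac{\beta}{2}\sqrt{y})$ is operator monotone, and your scalar bound $\Phi(\alpha y)\ge\sqrt{\alpha}\,\Phi(y)$ is exactly the rescaling step used there. The genuine gap is in your reduction: the purely multiplicative operator inequality $H_\Delta^2\ge(1-c'h)^2\,E^0(-ih\nabla)^2\otimes\id_{\C^2}$, at which the whole subsequent program is aimed, is false in general. Test it on a spinor $(u,0)$ with $u(x)=e^{i\sqrt{\mu}\,x/h}\chi(x)$ and $\chi$ a fixed smooth bump concentrated where $|\psi(x)|\le\tfrac12$ (nothing in Proposition \ref{decomp} excludes this; one only knows $\|\psi\|_{H^1}\le C$, not a pointwise lower bound): the left side is $\approx\langle u,k^2u\rangle+\Delta_0^2\int|\psi|^2|\chi|^2$, the right side carries the full $\Delta_0^2\|\chi\|_2^2$, and since every other discrepancy is $O(h^3)$ while $\Delta_0^2\sim h^2$, the inequality fails for small $h$. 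The lemma survives only through the mechanism you mention in a single clause but never implement: an \emph{additive} error of size $O(h)$ inside the argument of $\Phi$ costs merely a multiplicative factor $1+O(h)$, because $\Phi\ge 2T$ and $\Phi'$ is bounded by $C\beta$. Making that quantitative is the entire content of the proof; your plan defers it to an unexecuted ``Matsubara-frequency-by-frequency'' resolvent analysis, precisely at the point you yourself identify as delicate, so the argument does not close.

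The commutator route is moreover both unnecessary and unsound as written. It brings in $\|\psi'\|_\infty$ and $\|\psi''\|_2$, which under the available bounds \eqref{eq:psi} (namely $\|\psi\|_{H^2}\le C\epsilon h^{-1}$, with $\epsilon=h^{1/3}$ in the end) are not $O(1)$, so the claimed absorption constants fail; the Young term $\lambda(-h^2\nabla^2+1)$ with $\lambda\sim h$ has size $h$, which is not an ``$O(h)$-correction relative to $E^0(-ih\nabla)^2$'' on the shell where $E^0(p)^2\sim h^2$; and the prefactor $h^4\Delta_0^2$ should read $h^2\Delta_0^2$. The paper's proof of \eqref{eq:hdeltah0} avoids all commutators: write $H_\Delta=H^0_{\Delta_0}+V$ with $V$ a bounded matrix-valued multiplication operator, $\|V\|_\infty\le\Delta_0\|\psi-1\|_\infty+h^2\|W\|_\infty\le Ch$ by Sobolev ($\|\psi\|_\infty\le C\|\psi\|_{H^1}\le C$), so Schwarz gives $H_\Delta^2\ge(1-\eta)(H^0_{\Delta_0})^2-\eta^{-1}Ch^2$ with no derivative of $\psi$ appearing; then operator monotonicity of $\Phi$ together with the two scalar facts $\Phi((1-\eta)^{-1}y)\le(1-\eta)^{-1/2}\Phi(y)$ and $\Phi(y+s)\le(1+\tfrac{\beta^2 s}{4})\Phi(y)$, applied with $\eta=h$, yields the lemma in a few lines. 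If you want to salvage your plan, replace the multiplicative reduction by this additive-error version and carry the error through $\Phi$ as above.
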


\begin{proof}
An application of Schwarz's inequality yields that for every $0<\eta<1$
$$
H_\Delta^2 \geq (1-\eta) \left(H_{\Delta_0}^0\right)^2- \eta^{-1} (  \Delta_0^2 \|\psi - 1\|_\infty^2 + h^2 \|W\|_\infty^2) \,.
$$
The expansion formula \cite[(4.3.91)]{AS}
$$
 \frac{ x}{\tanh(x/2)} = 2 + \sum_{k=1}^\infty \left( 2- \frac{2 k^2 \pi^2}{x^2/4 + k^2 \pi^2}\right)
$$
shows that $x \mapsto \sqrt x/ \tanh \sqrt x$ is an operator monotone function. This operator monotonicity implies that 
\begin{align*}
 K^0_T(-ih\nabla)\otimes \id_{\C^2} &\leq \frac { (1-\eta)^{-1/2}
\sqrt{H_\Delta^2 +
\eta^{-1}(\Delta_0^2\|\psi-1\|_\infty^2+h^2\|W\|_\infty^2)}}{\tanh\tfrac 1{2T} 
(1-\eta)^{-1/2} \sqrt{H_\Delta^2 +
\eta^{-1}(\Delta_0^2\|\psi-1\|_\infty^2+h^2\|W\|_\infty^2)}} \\ & \leq
(1-\eta)^{-1/2}  \frac {  \sqrt{H_\Delta^2 +
\eta^{-1}(\Delta_0^2\|\psi-1\|_\infty^2+h^2\|W\|_\infty^2)}}{\tanh\tfrac 1{2T } 
\sqrt{H_\Delta^2 + \eta^{-1}(\Delta_0^2\|\psi-1\|_\infty^2+h^2\|W\|_\infty^2)}}
\\ & \leq (1-\eta)^{-1/2} \left( 1+ \tfrac 1{4T^2 \eta} 
(\Delta_0^2\|\psi-1\|_\infty^2+h^2\|W\|_\infty^2)\right) \frac {H_\Delta}{\tanh
\tfrac 1{2 T} H_\Delta}
\end{align*}
for $0<\eta<1$. The Sobolev inequality and \eqref{eq:psi} show that $\|\psi\|_\infty \leq C \|\psi\|_{H^1} \leq C$, and hence the lemma follows by choosing $\eta=h$.
\end{proof}

To proceed, we denote $\alpha_\Delta=[\Gamma_\Delta]_{12}$ and  recall from Theorem
\ref{scop} that
\begin{align*}
\alpha_\Delta & = \frac{\Delta_0}{4} \left(\psi K_T^0(-ih\nabla)^{-1} +
K_T^0(-ih\nabla)^{-1} \psi \right) + \eta_1+\eta_2 \\
& = \frac{1}{2} \left(\psi \alpha^0 + \alpha^0 \psi \right) + \eta_1+\eta_2
\end{align*}
with $\eta_1$ and $\eta_2$ satisfying the bounds \eqref{eq:eta1sc} and
\eqref{eq:eta2}. The second equality follows from the explicit form
\eqref{eq:alpha0expl} of $\alpha^0$. Comparing this with \eqref{eq:alphadecomp}
we infer that
\begin{equation}\label{eq:alphadeltadecomp}
\alpha=\alpha_\Delta+\sigma-\eta_1-\eta_2 \,.
\end{equation}
Then \eqref{eq:klein0} and \eqref{eq:hdeltah0} imply that
\begin{align}\label{eq:remupper1}
& \tfrac T2 \, \H(\Gamma,\Gamma_\Delta) - ah \int_{\calC} |\sigma(x,x)|^2\,dx
\notag \\
& \qquad \geq (1-ch)\Tr
K^0_T(-ih\nabla)(\alpha-\alpha_\Delta)(\overline{\alpha-\alpha_\Delta}) - ah
\int_{\calC} |\sigma(x,x)|^2\,dx \notag\\
& \qquad \geq (1-ch)\Tr K^0_T(-ih\nabla)\sigma\overline\sigma - ah \int_{\calC}
|\sigma(x,x)|^2\,dx \notag \\
& \qquad\qquad - (1-ch) 2\re \Tr
K^0_T(-ih\nabla)\sigma(\overline{\eta_1+\eta_2}) \,.
\end{align}
In order to bound the first term on the right side from below we are going
to choose a parameter $\rho\geq0$ such that $ch+\rho\leq 1/2$. Here $c$ is the
constant from \eqref{eq:hdeltah0}. (Eventually, we will pick either $\rho=0$ or
$\rho=1/4$, say.) Note that
\begin{align*}
 (1-ch) \Tr K^0_T(-ih\nabla) - ah \delta 
= \rho K^0_T(-ih\nabla) & +
(1-2ch-2\rho)(K^0_T(-ih\nabla)- ah \delta) \\
& + (ch+\rho) (K^0_T(-ih\nabla)- 2ah \delta) \,.
\end{align*}
We recall that the operator $K^0_T(-ih\nabla)- ah \delta$ is non-negative
and that the operator $K^0_T(-ih\nabla)- 2ah \delta$ has a negative eigenvalue
of order one (by the form boundedness of $\delta$ with respect to
$K^0_T(-i\nabla)$). Hence $K^0_T(-ih\nabla)- 2ah \delta \geq -C_1$ with a
constant $C_1$ independent of $h$. (In the following it will be somewhat
important to keep track of various constants, therefore we introduce here a
numbering.) Moreover, using the fact that $K^0_T(-ih\nabla)\geq c_1
(1-h^2\nabla^2)$ we arrive at the lower bound
\begin{align*}
 (1-ch) \Tr K^0_T(-ih\nabla) - ah \delta 
\geq c_1 \rho (1-h^2\nabla^2) - C_1 (ch+\rho) \,,
\end{align*}
which means for the first term on the right side of \eqref{eq:remupper1} that
\begin{equation}
 \label{eq:remupper3}
(1-ch)\Tr K^0_T(-ih\nabla)\sigma\overline\sigma - ah \int_{\calC}
|\sigma(x,x)|^2\,dx
\geq c_1 \rho \|\sigma\|_{H^1}^2 - C_1 (ch+\rho) \|\sigma\|_2^2 \,.
\end{equation}

We now turn to the second term on the right side of \eqref{eq:remupper1}.
Theorem \ref{scop}, together with the bounds \eqref{eq:psi} on $\psi$, implies
that
$\|\eta_1+\eta_2\|_{H^1}^2 \leq C \epsilon^2 h^3$. This bound, combined with
$\|\sigma\|_{H^1}^2 \leq C \epsilon^{-2} h^{3}$ from Lemma \ref{decomp},
however, is not good enough. (It leads to an error of order $h^3$.) Instead, we
shall make use of the observation that in the decompositions
$\sigma=\sigma_1+\sigma_2$ and $\eta_1+\eta_2$ one has
$$
\Tr K^0_T(-ih\nabla)\sigma_2\overline{\eta_1} = 0 \,.
$$
This can be seen by writing out the trace in momentum space and recalling that
the Fourier transform of the $\psi$ involved in $\sigma_2$ has support in
$\{|p|\geq\epsilon h^{-1}\}$, whereas the one of the $\psi$ involved in $\eta_1$
has support in $\{|p|<\epsilon h^{-1}\}$ (see also \eqref{def:eta1}).

Using the estimates \eqref{eq:beta1} and \eqref{eq:eta2} on $\sigma_1$
and $\eta_2$ we conclude that
\begin{align}
 \label{eq:remupper4}
\left| \Tr K^0_T(-ih\nabla)\sigma(\overline{\eta_1+\eta_2}) \right|
& \leq \left| \Tr K^0_T(-ih\nabla)\sigma_1\overline{\eta_1} \right| 
+ \left| \Tr K^0_T(-ih\nabla)\sigma\overline{\eta_2} \right| \notag \\
& \leq C_2 \left( \epsilon h^3 + h^{5/2} \|\sigma\|_{H^1} \right) \,.
\end{align}
Combining \eqref{eq:remupper1}, \eqref{eq:remupper3} and \eqref{eq:remupper4} we
find that
\begin{align}\label{eq:remupper5}
& \tfrac T2 \, \H(\Gamma,\Gamma_\Delta) - ah \int_{\calC} |\sigma(x,x)|^2\,dx
\notag \\
& \qquad \geq c_1 \rho \|\sigma\|_{H^1}^2 - C_1 (ch+\rho) \|\sigma\|_2^2
- 2 C_2 \left( \epsilon h^3 + h^{5/2} \|\sigma\|_{H^1} \right) \,.
\end{align}
Next, we are going to distinguish two cases, according to whether
$4 C_1 \|\sigma\|_2^2 \leq c_1 \|\sigma\|_{H^1}^2$ or not. In the first case, we
choose $\rho=1/4$ and $h$ so small that $ch+\rho\leq 1/2$. In this way we can
bound the previous expression from below by
$$
\tfrac18 c_1 \|\sigma\|_{H^1}^2 - 2 C_2 \left( \epsilon h^3 + h^{5/2}
\|\sigma\|_{H^1} \right)
\geq - 8 c_1^{-1} C_2^2 h^5 - 2 C_2 \epsilon h^3 \,.
$$
This proves the claimed (indeed, a better) bound \eqref{eq:remupper} in
this case.

Now assume, conversely, that $4C_1 \|\sigma\|_2^2 > c_1 \|\sigma\|_{H^1}^2$.
Then we choose $\rho=0$ and bound \eqref{eq:remupper5} from below by
\begin{align*}
& - C_1 ch \|\sigma\|_2^2 - 2 C_2 \left( \epsilon h^3 + h^{5/2} \|\sigma\|_{H^1}
\right) \\
& \quad \geq - C_1 ch \|\sigma\|_2^2 - 2 C_2 \left( \epsilon h^3 + h^{5/2}
\left(4 C_1/c_1\right)^{1/2} \|\sigma\|_2 \right) \,.
\end{align*}
The bound \eqref{eq:beta} on $\|\sigma\|_2$ now leads again to the claimed lower
bound \eqref{eq:remupper}. 

This concludes the proof of the lower bound to the free energy in
Theorem \ref{thm:main}.  Concerning the statement about approximate
minimizers we note that $ \F^{\rm BCS} (\Gamma^0) - \F^{\rm BCS}
(\Gamma_0) = O(h^3) $ and that our a-priori bounds on $\alpha$ in
Proposition \ref{decomp} remain true under the weaker condition that
$\F^{\rm BCS}(\Gamma) \leq \F^{\rm BCS}(\Gamma^0) + C h^3.$ We leave
the details to the reader.


\section{Proof of semiclassical asymptotics}\label{proof-lemmas}

In this section we shall sketch the proofs of Theorems \ref{sc} and
\ref{scop} containing the semiclassical asymptotics. We shall skip
some technical details and refer to \cite{FHSS} for a thorough discussion.


\subsection{Preliminaries}

It will be convenient to use the following abbreviations
\begin{equation}\label{def:k}
k = -h^2\nabla^2 - \mu + h^2 W(x), \qquad k_0= -h^2\nabla^2 - \mu \,.
\end{equation}
We will frequently have to bound various norms of the resolvents
$(z-k)^{-1}$ for $z$ in the contour $\Gamma$ defined by $\im z=\pm
\pi/(2\beta)$ for $\beta >0$. We state these auxiliary bounds
separately.

For $p\geq 1$, we define the $p$-norm of a periodic operator $A$ by
\begin{equation}
\|A\|_p = \left( \Tr |A|^p \right)^{1/p}
\end{equation}
where $\Tr$ stands again for the trace per unit volume. We note that for a Fourier multiplier $A(-ih\nabla)$, these norms are given as
\begin{equation}
\left\| A(-ih\nabla) \right\|_p = h^{-1/p} \left( \int_{\R} |A(q)|^p \frac{ dq}{2\pi} \right)^{1/p}\,.
\end{equation}
The usual operator norm will be denoted by $\|A\|_\infty$. 

\begin{Lemma}
For $z=t\pm i\pi/(2\beta)$ and all sufficiently small $h$ one has
\begin{equation}\label{pb}
 \left\|(z-k)^{-1}\right\|_p \leq C\, h^{-1/p} \times \left\{ \begin{array}{cl} t^{-1/(2p)}  & \text{for $t\gg 1$} \\ |t|^{-1+1/(2p)} & \text{for $t\ll -1$} \end{array}\right. \quad\text{if}\ 1\leq p\leq\infty\,,
\end{equation}
as well as
\begin{equation}\label{infb}
 \left\|(z-k)^{-1}\right\|_\infty \leq C \times \left\{ \begin{array}{cl} 1  & \text{for $t\gg 1$} \\ |t|^{-1} & \text{for $t\ll -1$} \end{array}\right. \,.
\end{equation}
\end{Lemma}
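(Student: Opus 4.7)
The natural plan is to first establish the bounds for the free operator
$k_0=-h^2\nabla^2-\mu$, which is a Fourier multiplier, and then transfer them
to $k=k_0+h^2 W$ by treating the bounded potential $h^2 W$ as a small
perturbation in operator norm. Writing $z=t+is$ with $s=\pm\pi/(2\beta)$, both
the $p$-norm and the $\infty$-norm of $(z-k_0)^{-1}$ reduce to explicit
one-dimensional integrals via the Fourier multiplier formula recorded just
before the lemma.

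For $k_0$ the symbol is $(z-q^2+\mu)^{-1}$, so
$$
\|(z-k_0)^{-1}\|_p^p = h^{-1} \int_\R \bigl((q^2-\mu-t)^2+s^2\bigr)^{-p/2}\,\frac{dq}{2\pi}\,.
$$
When $t\ll-1$ one has $q^2-\mu-t\ge|t|-\mu$, so the integrand is pointwise
dominated by $(q^2+|t|-\mu)^{-p}$; the substitution $q=(|t|-\mu)^{1/2}u$ turns
this into $C h^{-1}|t|^{1/2-p}$, giving the claimed $p$-norm bound after taking
$p$-th roots. When $t\gg 1$, the symbol $|z-k_0(q)|^2$ attains its minimum $s^2$
at $q=\pm\sqrt{t+\mu}$; expanding $q^2-\mu-t\approx\pm 2\sqrt{t+\mu}\,(q\mp\sqrt{t+\mu})$
near each critical point and rescaling by $s/(2\sqrt{t+\mu})$ reduces the local
contribution to $C s^{1-p}(t+\mu)^{-1/2}\int(v^2+1)^{-p/2}\,dv$, which is finite
for $p>1$; the complementary region is controlled by the polynomial decay of
the symbol at infinity. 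This produces the bound $t^{-1/(2p)}$, with an
additional logarithmic factor at $p=1$ that can be absorbed into the constant
since only the regime of large $t$ is needed. The $\infty$-norm bounds follow
directly from the pointwise minimum of $|z-k_0(q)|$, which equals $|s|$ for
$t\gg 1$ and is of order $|t|$ for $t\ll-1$.

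To pass from $k_0$ to $k$, I factor $z-k=(z-k_0)\bigl(1-(z-k_0)^{-1}h^2 W\bigr)$.
Since $\|(z-k_0)^{-1}h^2 W\|_\infty\le h^2\|W\|_\infty\|(z-k_0)^{-1}\|_\infty$
is bounded by $Ch^2$ for $t\gg 1$ and by $Ch^2/|t|$ for $t\ll-1$, choosing $h$
sufficiently small (uniformly in $t$ on the contour) makes this operator norm
at most $\tfrac12$. The Neumann series then yields
$\|(1-(z-k_0)^{-1}h^2 W)^{-1}\|_\infty\le 2$, and writing
$(z-k)^{-1}=(1-(z-k_0)^{-1}h^2 W)^{-1}(z-k_0)^{-1}$ together with the standard
inequality $\|AB\|_p\le\|A\|_\infty\|B\|_p$ gives
$\|(z-k)^{-1}\|_p\le 2\,\|(z-k_0)^{-1}\|_p$ in every regime, and similarly for
the $\infty$-norm. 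The main technical point is the careful analysis of the
free integrals near the critical points $q=\pm\sqrt{t+\mu}$ for $t\gg 1$; once
that is done, the perturbative transfer is essentially automatic.
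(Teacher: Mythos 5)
Your argument is correct and follows essentially the same route as the paper's (very terse) proof: compute the norms for the Fourier multiplier $k_0=-h^2\nabla^2-\mu$ by evaluating the explicit integral, then transfer to $k=k_0+h^2W$; your Neumann-series factorization $(z-k)^{-1}=\bigl(1-(z-k_0)^{-1}h^2W\bigr)^{-1}(z-k_0)^{-1}$ with $\|AB\|_p\le\|A\|_\infty\|B\|_p$ is simply a more explicit version of the paper's one-line remark that the spectra of $k$ and $k_0$ agree up to $O(h^2)$. The only blemish is the endpoint $p=1$, $t\gg1$: there the integral genuinely behaves like $h^{-1}t^{-1/2}\ln t$, so the logarithm cannot be ``absorbed into the constant''; this is really a (harmless) imprecision of the stated lemma at $p=1$, which is never used in the paper --- only $p=2,3,6,\infty$ occur --- so your treatment of the cases that matter is complete.
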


\begin{proof}
The estimates are easily derived with $k_0$ instead of $k$ by evaluating the
corresponding integral. Since the spectra of $k$ and $k_0$ agree up to $O(h^2)$
the same bounds hold for $k$.
\end{proof}


\subsection{Proof of Theorem~\ref{sc}}\label{sec:sc}

The function $f$ in (\ref{deff}) is analytic in the strip $|\im z|< \pi$, and
 we can write
\begin{align*}
f(\beta H_\Delta)- f(\beta H_0)
= \frac 1{2\pi i}\int_\Gamma f(\beta z) \left[ \frac 1{z-H_\Delta} - \frac
1{z-H_0} \right]\, dz \,,
\end{align*}
where $\Gamma$ is the contour $z= r \pm i\tfrac \pi {2\beta}$, $r\in \R$. We
emphasize that this contour representation is not true for the operators
$f(\beta H_\Delta)$ and $f(\beta H_0)$ separately (because of a contribution
from infinity), but only for their difference. 

We claim that
\begin{equation}
 \label{eq:complconj}
\left[ f(\beta H_\Delta) \right]_{11} = \overline{ \left[ f(\beta H_\Delta)
\right]_{22}} - \beta \overline{ \left[ H_\Delta\right]_{22} } \,.
\end{equation}
Recall that $[\, \cdot\,]_{ij}$ denotes the $ij$ element of an
operator-valued $2\times 2$ matrix. To see \eqref{eq:complconj}, we introduce
the unitary matrix
$$
U = \left( \begin{array}{cc}  0 & 1 \\ -1 & 0 \end{array} \right)
$$
and note that
$$
\left[ f(\beta H_\Delta) \right]_{11} = -\left[ U f(\beta H_\Delta) U
\right]_{22} \,.
$$
On the other hand, $UH_\Delta U = -\overline{H_\Delta}$, which implies that
$$
U f(\beta H_\Delta) U = f(-\beta \overline{H_\Delta}) 
= \overline{f(-\beta H_\Delta)} \,.
$$
The claim \eqref{eq:complconj} now follows from the fact that $f(-z)=f(z)-z$.

Subtracting \eqref{eq:complconj} and the corresponding formula for $H_0$ and
noting that $H_\Delta$ and $H_0$ coincide on the diagonal we find that the two
diagonal entries of $f(\beta H_\Delta)- f(\beta H_0)$ are complex conjugates of
each other. Since their trace is real we conclude that
$$
\Tr\left[ f(\beta H_\Delta)- f(\beta H_0) \right]
= \frac 1{\pi i}\int_\Gamma f(\beta z) \,\Tr \left[ \frac 1{z-H_\Delta} -
\frac 1{z-H_0} \right]_{11} \, dz\,.
$$
(For technical details concerning the interchange of the trace and the integral
we refer to \cite{FHSS}.)

The resolvent identity and the fact that 
\begin{equation}\label{defdel}
\delta:=H_\Delta-H_0=-h \mbox{$\left(\begin{array}{cc} 0 & \psi(x) \\ \overline{\psi(x)} & 0 \end{array}\right)$}
\end{equation}
 is off-diagonal (as an operator-valued $2\times 2$ matrix) implies that 
\begin{align*}
& \Tr \left[ \frac 1{z-H_\Delta} - \frac 1{z-H_0} \right]_{11}
= \Tr \left[ \frac 1 {z-H_0} \left( \delta \frac
1{z-H_0}\right)^2\right]_{11} \\
& \qquad\qquad + \Tr\left[ \frac 1 {z-H_0} \left( \delta \frac 1{z-H_0}\right)^4
\right]_{11}
+ \Tr \left[ \frac 1 {z-H_\Delta} \left( \delta \frac
1{z-H_0}\right)^6\right]_{11} \\
& \qquad =: I_1 + I_2 + I_3 \,.
\end{align*}
In the following we shall prove that
\begin{align}\label{eq:i1}
\frac 1{\pi i}\int_\Gamma f(\beta z) I_1 \,dz & = -\frac{h\beta^2}2 \|\psi\|_2^2
\int_\R g_0(\beta (q^2-\mu))\,\frac{dq}{2\pi} \notag \\
& \quad + \frac{h^3\beta^3}{8} \|\psi'\|_2^2 \int_{\R} \left( g_1(\beta(q^2 -\mu)) + 2\beta q^2 g_2(\beta(q^2-\mu))\right) \, \frac{dq}{2\pi} \notag \\
& \quad + \frac {h^3\beta^3} {2} \langle \psi|W|\psi\rangle  \int_{\R} 
g_1(\beta (q^2-\mu))\, \frac{dq}{2\pi}  \notag\\
& \quad + O(h^5) \|\psi\|_{H^2}^2 \,,
\end{align}
\begin{equation}\label{eq:i2}
\frac 1{\pi i}\int_\Gamma f(\beta z) I_2 \,dz =  \frac{h^3 \beta^3}{8}
\|\psi\|_4^4 \int_{\R} \, \frac{g_1(\beta(q^2-\mu))}{q^2-\mu}\, \frac{dq}{2\pi}
+ O( h^{5}) \|\psi\|^3_{H^1} \|\psi\|_{H^2}
\end{equation}
and
\begin{equation}\label{eq:i3}
\frac 1{\pi i}\int_\Gamma f(\beta z) I_3 \,dz = O(h^5) \|\psi\|_{H^1}^6 \,.
\end{equation}
This will clearly prove \eqref{210}. We will treat the three terms
$I_3$, $I_2$ and $I_1$ (in this order) separately.

\bigskip

${\bf I_3}:$ With the notation $k$ introduced in (\ref{def:k}) at the beginning of this section, we have
\begin{align*}
I_3 = &\Tr \left[\frac 1 {z-H_\Delta} \right]_{11} \Delta \frac 1{z+ k}
\Delta^\dagger \frac 1{z-k} \Delta \frac 1{z+k} \Delta^\dagger \frac
1{z-k}\Delta \frac 1{z+ k} \Delta^\dagger \frac 1{z-k} \,.
\end{align*}
Using H\"older's inequality for the trace per unit volume (see \cite{FHSS}) and
the fact that $|z-H_\Delta|\geq \pi/(2\beta)$, we get
$$
|I_3| \leq \frac {2\beta} {\pi} h^6\|\psi\|_\infty^6
\left\|(z-k)^{-1}\right\|_6^3 \left\|(z+k)^{-1}\right\|_6^3 \,.
$$
Together with \eqref{pb}, this yields
$$
|I_3| \leq \frac{C h^5}{1+|z|^3} \|\psi\|_\infty^6 \,.
$$
Here it was important to get a decay faster than $|z|^{-2}$, since we need to
integrate $I_3$ against the function $f$ which behaves linearly at $-\infty$.
Since $\|\psi\|_\infty \leq C \|\psi\|_{H^1}$ by Sobolev inequalities we have
completed the proof of \eqref{eq:i3}.
\bigskip

${\bf I_2}:$ We continue with
\begin{align*}
I_2 & =\Tr \frac 1{z-k} \Delta \frac 1{z+ k} \Delta^\dagger \frac 1{z-k} \Delta
\frac 1{z+k} \Delta^\dagger \frac 1{z-k} \,.
\end{align*}
By the resolvent identity we have
\begin{equation}\label{res}
\frac 1{z-k} = \frac 1{z-k_0} + \frac 1{z-k_0} h^2 W \frac 1{z-k} \,.
\end{equation}
Using H\"older as above, we can bound 
\begin{align}\nonumber
& \left| \Tr \left(\frac 1{z-k}-\frac 1{z-k_0}\right) \Delta \frac 1{z+ k} \Delta^\dagger \frac 1{z-k} \Delta \frac 1{z+ k} \Delta^\dagger \frac 1{z-k} \right| \\
& \quad \leq h^6 \|W\|_\infty \left\|(z-k_0)^{-1} \right\|_\infty
\|\psi\|_\infty^4 \|(z-k)^{-1}\|_3^3 \|(z+ k)^{-1}\|_\infty^2 \,. \label{f1}
\end{align}
By \eqref{pb} and \eqref{infb} this is bounded by $ C h^5
\|\psi\|^4_{H^1(\calC)} (1+|z|^{5/2})^{-1}$. What we effectively have
achieved for this error is, therefore, to replace one factor of $(z-k)^{-1}$ in $I_2$ by a
factor of $(z-k_0)^{-1}$

In exactly the same way we proceed with the remaining factors $(z-k)^{-1}$ and
$(z+ k)^{-1}$ in $I_2$. The only difference is that $k$ might now be replaced
by $k_0$ in the terms we have already treated, but this does not effect the
bounds.

The final result is that $(\pi i)^{-1} \int_\Gamma f(\beta z) \, I_2 \, dz$
equals
\begin{align*}
\frac 1 {\pi i}\int_\Gamma f(\beta z) \Tr \left[ \frac 1{z-k_0} \Delta 
\frac 1{z+k_0} \Delta^\dagger \frac 1{z-k_0} \Delta \frac 1{z+k_0}
\Delta^\dagger \frac 1{z-k_0} \right] \, dz +  O( h^{5})
\|\psi\|^4_{H^1} \,,
\end{align*}
and it remains to compute the asymptotics of the integral.

Let us indicate how to perform the trace per unit volume $\Tr[ \dots ]$. In
terms of integrals the trace can be written as
\begin{multline}
\frac{h^4}{(2\pi)^4}\int_0^1 dx_1 \int_\R dx_2 \int_\R dx_3 \int_\R dx_4
\int_{\R^4} dp_1 dp_2 dp_3 dp_4 \
\overline{\psi(x_1)} \psi(x_2) \overline{\psi(x_3)} \psi(x_4) \\
\times \frac{e^{i p_1(x_1 - x_2)} }{(z - (h^2 p_1^2 - \mu))^2}   \frac{e^{i
p_2(x_2 - x_3)} }{z +(h^2 p_2^2 - \mu)}   \frac{e^{i p_3(x_3
- x_4)} }{z - (h^2 p_3^2 - \mu)}  \frac{e^{i p_4(x_4 - x_1)} }{z +
(h^2 p_4^2 - \mu)} \,.
\end{multline}
Since $\psi$ is periodic with period one we have
$$
\psi(x_j) = \sum_{l_j \in 2\pi \Z} \hat\psi (l_j) e^{i x_jl_j} \,.
$$
We insert this into the above integral and perform the integrals over $x_2, x_3,
x_4$. This leads to $\delta$-distributions such that we can subsequently perform
the integrals over $p_2, p_3, p_4$, as well as the integral over $x_1$. In this
way we obtain
\begin{align*}
 & \frac 1 {\pi i}\int_\Gamma f(\beta z) \Tr \left[ \frac 1{z-k_0} \Delta 
\frac 1{z+k_0} \Delta^\dagger \frac 1{z-k_0} \Delta \frac 1{z+k_0}
\Delta^\dagger \frac 1{z-k_0} \right] \, dz \\
&\quad = h^3 \sum_{p_1,p_2,p_3 \in 2\pi \Z} \widehat \psi(p_1) \widehat
{\psi^*}(p_2) \widehat\psi(p_3) \widehat {\psi^*}(-p_1-p_2-p_3) 
F(hp_1,hp_2,hp_3)
\end{align*}
with
\begin{align*}
F(p_1,p_2,p_3) &=  \frac {\beta^4}{\pi i} 
\int_\Gamma dz\, f(\beta z) \int_\R \,\frac{dq}{2\pi} \frac 1{\left(
z-\beta((q+p_1+p_2+p_3)^2 + \mu) \right)^2} \\
& \quad \times \frac 1{z+\beta((q+p_1+p_2)^2-\mu)} \,
\frac 1{z-\beta((q+p_1)^2-\mu)} \,
\frac 1{z+\beta(q^2-\mu)} \,.
\end{align*}
The leading behavior is given by
$$
F(0,0,0) \sum_{p_1,p_2,p_3 \in 2\pi \Z} \widehat \psi(p_1) \widehat
{\psi^*}(p_2) \widehat\psi(p_3) \widehat {\psi^*}(-p_1-p_2-p_3)
= F(0,0,0) \|\psi\|_4^4 \,.
$$
The integral $F(0,0,0)$ can be calculated explicitly and we obtain
$$
F(0,0,0) =  \frac{ \beta^3}{8}\int_{\R} \, \frac{g_1(\beta(q^2-\mu))}{q^2-\mu}\, \frac{dq}{2\pi}
$$
with $g_1$ from \eqref{defg1}. In order to estimate the remainder we use the fact that \cite{FHSS}
$$
\left| F(p_1,p_2,p_3) - F(0,0,0)\right| \leq \const \left( p_1^2 + p_2^2 + p_3^2\right) \,.
$$
Using Schwarz and H\"older we can bound
$$
\sum_{p_1,p_2,p_3 \in 2\pi \Z} p_1^2 \left| \widehat \psi^*(p_1) \widehat \psi^*(p_2) \widehat\psi(p_3) \widehat \psi(-p_1-p_2-p_3)\right| \leq \const \|\psi\|_{H^2}\|\psi\|_{H^1}^3
$$
and equally with $p_1^2$ replaced by $p_2^2$ and $p_3^2$. Hence we conclude that
\begin{align*}
 \frac 1{\pi i} \int_\Gamma f(\beta z) \, I_2 \, dz & 
= h^3 \!\!\!\!\! \sum_{p_1,p_2,p_3 \in 2\pi \Z} \!\!\!\! \widehat \psi(p_1)
\widehat {\psi^*}(p_2) \widehat\psi(p_3) \widehat {\psi^*}(-p_1-p_2-p_3) 
F(hp_1,hp_2,hp_3) \\
& \qquad\qquad + O( h^{5}) \|\psi\|^4_{H^1} \\
& = h^3 F(0,0,0) \|\psi\|_4^4 + O( h^{5}) \|\psi\|^3_{H^1} \|\psi\|_{H^2} \,.
\end{align*}
This is what we claimed in \eqref{eq:i2}.

\bigskip

${\bf I_1}:$ Finally, we examine the contribution of
$$
I_1 = \Tr \left[ \frac 1{z-k} \Delta \frac 1{z+ k} \Delta^\dagger \frac 1{z-k}
\right] \,.
$$ 
Using the resolvent identity (\ref{res}) we can write $I_1=I_1^a+I_1^b+I_1^c$, 
where
$$
I_1^a=  \Tr \left[\frac 1{z-k_0} \Delta \frac 1{z+k_0} \Delta^\dagger \frac
1{z-k_0} \right]
$$ 
and
\begin{align*}
I_1^b= \Tr \Biggl[& \frac 1{z-k_0} (k-k_0)  \frac 1{z-k_0} \Delta \frac
1{z+k_0} \Delta^\dagger \frac 1{z-k_0} \\ &   + \frac 1{z-k_0} \Delta  \frac
1{z+k_0} (k_0- k) \frac 1{z+k_0} \Delta^\dagger \frac 1{z-k_0} \\ &  + \frac
1{z-k_0} \Delta  \frac 1{z+k_0} \Delta^\dagger \frac 1{z-k_0} (k-k_0) \frac
1{z-k_0} \Biggl] \,.
\end{align*}
The part $I_1^c$ consists of the rest. We claim that
\begin{align}
 \label{eq:i1a}
\frac 1{\pi i} \int_\Gamma f(\beta z) \, I_1^a \, dz 
& = -\frac{h\beta^2}2 \|\psi\|_2^2 \int_\R g_0(\beta (q^2-\mu))\,\frac{dq}{2\pi} \notag \\
& \quad + \frac{h^3\beta^3}{8} \|\psi'\|_2^2 \int_{\R} \left( g_1(\beta(q^2
-\mu)) + 2\beta q^2 g_2(\beta(q^2-\mu))\right) \, \frac{dq}{2\pi} \notag \\
& \quad + O(h^5) \|\psi\|_{H^2}^2 \,,
\end{align}
\begin{equation}
 \label{eq:i1b}
\frac 1{\pi i} \int_\Gamma f(\beta z) \, I_1^b \, dz =
\frac {h^3\beta^3} {2} \langle \psi|W|\psi\rangle  \int_{\R}  g_1(\beta
(q^2-\mu))\, \frac{dq}{2\pi}  + O(h^5) \|\psi\|_{H^2}\|\psi\|_{H^1}
\end{equation}
and
\begin{equation}
 \label{eq:i1c}
\frac 1{\pi i} \int_\Gamma f(\beta z) \, I_1^c \, dz = O(h^5) \|\psi\|_{H^1}^2
\,.
\end{equation}
Clearly, this will imply \eqref{eq:i1}.

We begin with $I_1^c$. These terms contain at least five resolvents, where at
least two terms are of the form $(z-k_\#)^{-1}$ and at least one term of the
form $(z+k_\#)^{-1}$. (Here $k_\#$ stands for any of the operators $k$ or
$k_0$.) Moreover, they contain at least two factors of $k-k_0$. The terms are
either of the type
\begin{equation}
 \label{eq:i1c1}
A=\Tr  \frac 1{z-k_0} (k-k_0)  \frac 1{z-k} \Delta \frac 1{z+k_0}(k_0-k) \frac 1{z+k} \Delta^\dagger \frac 1{z-k_0}
\end{equation}
(at least three minus signs) or of the type
\begin{equation}
 \label{eq:i1c2}
B=\Tr \frac 1{z-k_0} \Delta  \frac 1{z+k_0} (k_0-k) \frac
1{z+k_0} (k_0-k) \frac1{z+k} \Delta^\dagger \frac 1{z-k}
\end{equation}
(only two minus signs). Terms of the first type we bound by 
$$
|A| \leq C h^6 \|W\|_\infty^2\|\psi\|_\infty^2 \|(z-k_0)^{-1}\|_\infty^2
\|(z+k_0)^{-1}\|_3   \|(z+k)^{-1}\|_3  \|(z-k)^{-1}\|_3 \,.
$$
By \eqref{pb} and \eqref{infb} this can be estimated by $Ch^5 |z|^{-2 + 1/6}$
if $\re z\geq 1$ and by $Ch^5 |z|^{-2-1/6}$ if $\re z\leq -1$. This bound is
finite when integrated against $f(\beta z)$.

Terms of type $B$ can be bounded similarly by replacing $z$ by $-z$. Indeed, we
note that since $\int_\Gamma z\, B \, dz = 0$, we can replace $f(\beta z)$ by
$f(-\beta z) = f(\beta z)-\beta z$ in the integrand without changing the value
of the integral. I.e., we can integrate $B$ against a function that decays
exponentially for negative $t$ and increases linearly for positive $t$, instead
of the other way around. These considerations lead to the estimate
\eqref{eq:i1c}.

\medskip

Next, we discuss the term $I_1^a$. After doing the contour integral the term $I_1^a$ gives  
$$
(\pi i)^{-1} \int_\Gamma f(\beta z) \, I_1^a \, dz 
= h \sum_{p \in 2\pi \Z} | \hat \psi(p) |^2  G (hp)
$$
with
$$
G(p) = -\frac{\beta}{2}
\int_{\R} \frac { \tanh\left(\tfrac 1{2} \beta((q+p)^2-\mu)\right) +   \tanh\left(\tfrac 1{2}\beta (q^2-\mu)\right)}{(p+q)^2 + q^2 -2\mu}\, \frac{dq}{2\pi} \,.
$$
By definition (\ref{defg0}) we have 
$$
G(0) = - \frac {\beta^2}{2}  \int_{\R} g_0(\beta (q^2-\mu))\,\frac{dq}{2\pi} \,.
$$
Integrating by parts we can write
\begin{align}\nn 
G''(0) = \frac {\beta^3}{4} \int_{\R} \left( g_1(\beta(q^2 -\mu)) + 2\beta q^2
g_2(\beta(q^2-\mu))\right) \, \frac{dq}{2\pi}
\end{align}
with $g_1$ and $g_2$ from \eqref{defg1} and \eqref{defg2}. Moreover, one can show that \cite{FHSS}
$$
\left|G(p)-G(0)-\tfrac12 p^2 G''(0) \right| \leq C p^4 \,.
$$
{}From this we conclude that
\begin{align*}
\frac 1{\pi i} \int_\Gamma f(\beta z) \, I_1^a \, dz 
& = h \sum_{p \in 2\pi \Z} | \hat \psi(p) |^2  \left( G(0) + \tfrac12 G''(0) h^2 p^2\right) + O(h^5) \|\psi\|_{H^2}^2 \\
& = h G(0) \|\psi\|_2^2 + \tfrac12 G''(0) h^3 \|\psi'\|_2^2 + O(h^5) \|\psi\|_{H^2}^2 \,,
\end{align*}
which is what we claimed in \eqref{eq:i1a}.

\medskip

Finally, we proceed to $I_1^b$. After the contour integration we find
$$
\frac 1{\pi i} \int_\Gamma f(\beta z) \, I_1^b \, dz 
= h^3 \sum_{p,q \in 2\pi \Z} \widehat \psi^*(p) \widehat \psi(q) \widehat
W(-p-q) L(hp,hq) \,,
$$
where
\begin{align*}
L(p,q) & = \beta^3 \int_{\R}  L(p,q,k) \, \frac{dk}{2\pi} 
\end{align*}
with
\begin{align*}
L(p,q,k) & =   \frac 1{\pi i} \int_\Gamma  \ln\left(2+e^{- \beta z}+e^{\beta  z}\right)  \frac 1{z+k^2-\mu} \frac 1{z-p^2+\mu} \frac 1{z-q^2+\mu} \\ &
\qquad\qquad  \times \left(\frac 1{z-p^2+\mu} + \frac 1{z-q^2+\mu}+\frac
1{z+k^2-\mu}\right) \, dz \,.
\end{align*}
We have 
$$
L(0,0) = \frac {\beta^3} {2} \int_{\R}  g_1(\beta (k^2-\mu))\, \frac{dk}{2\pi}
$$
and (see \cite{FHSS} for details)
$$
\left| L(p,q) - L(0,0) \right| \leq C  \left( p^2+q^2\right) \,.
$$
By the Schwarz inequality we can bound
$$
\sum_{p,q \in 2\pi \Z} \left| \widehat \psi^*(p) \widehat \psi(q) \widehat
W(-p-q) (p^2 + q^2) \right|
\leq C \|W\|_2 \|\psi\|_{H^2}\|\psi\|_{H^1} \,,
$$
and obtain
\begin{align*}
\frac 1{\pi i} \int_\Gamma f(\beta z) \, I_1^b \, dz 
& = h^3 L(0,0) \!\sum_{p,q \in 2\pi \Z}\! \widehat \psi^*(-p) \widehat \psi(q)
\widehat W(p-q)
+ O(h^5) \|\psi\|_{H^2}\|\psi\|_{H^1} \\
& = \frac {h^3\beta^3} {2} \langle \psi|W|\psi\rangle  \int_{\R}  g_1(\beta (k^2-\mu))\, \frac{dk}{2\pi}  + O(h^5) \|\psi\|_{H^2}\|\psi\|_{H^1} \,.
\end{align*}
This concludes the proof of Theorem \ref{sc}.


\subsection{Proof of Theorem~\ref{scop}}\label{sec:scop}

Since the function $\rho$ in \eqref{defrho} is analytic in the strip $|\im z| < \pi$, we can write $\left[\rho(\beta H_\Delta)\right]_{12}$ with the aid of a contour integral representation as
\begin{equation}
\left[\rho(\beta H_\Delta)\right]_{12} = \frac 1{2\pi i} \int_\Gamma  \rho(\beta z) \left[ \frac 1{z-H_\Delta} \right]_{12} \,dz\,,
\end{equation}
where $\Gamma$ is again the contour $\im z = \pm \pi/(2\beta)$.
We expand $(z-H_\Delta)^{-1}$ using the resolvent identity and note that, since
$H_\Delta = H_0+\delta$ with a $H_0$  \emph{diagonal} and 
$\delta$  \emph{off-diagonal}, only the terms containing an odd number of $\delta$'s contribute to
the $12$-entry of $(z-H_\Delta)^{-1}$. In this way arrive at the decomposition
\begin{equation}
\left[\rho(\beta H_\Delta)\right]_{12} = \eta_0 + \eta_1 + \eta_2^a +
\eta_2^b \,,
\end{equation}
where
\begin{equation}\label{def:eta0}
\eta_0 = - \frac{h}{4\pi i}\int_\Gamma \rho(\beta z) \left(\psi\ \frac
1{z^2-k_0^2} + \frac{1}{z^2-k_0^2}\ \psi \right) \, dz \,,
\end{equation}
\begin{equation}\label{def:eta1}
\eta_1 =  \frac h{4\pi i}\int_\Gamma \rho(\beta z) \left(\frac
1{z-k_0}\left[\psi, k_0\right] \frac{1}{z^2-k_0^2} + \frac{1}{z^2-k_0^2}
[\psi,k_0]\frac 1{z+k_0} \right) \, dz \,,
\end{equation}
\begin{equation}
\eta_2^a = - \frac {h^3}{2\pi i } \int_\Gamma \rho(\beta z) \frac 1{z-k_0}
\left( W \frac 1{z-k}\psi + \psi\frac 1{z+k_0} W\right) \frac 1{z+ k} \, dz
\end{equation}
and
\begin{equation}
\eta_2^b = - \frac {h^3}{2\pi i}  \int_\Gamma \rho(\beta z) \frac 1{z-k}\psi
\frac 1{z+ k} \bar\psi \frac 1{z-k} \psi  \left[ \frac 1{ z-H_\Delta }
\right]_{22} dz \,.
\end{equation}

A simple residue computation yields
\begin{align*}
\eta_0 & = - \frac{h}{4}  \left( \psi \ \frac{\rho(\beta k_0)-\rho(-\beta
k_0)}{k_0}
+ \frac{\rho(\beta k_0)-\rho(-\beta k_0)}{k_0}\ \psi \right) \\
& = \frac{h\beta}{4}  \left( \psi \ g_0(\beta k_0) + g_0(\beta k_0)\ \psi
\right) \,,
\end{align*}
which is the main term claimed in the theorem. In the following we shall prove that
\begin{equation}\label{eq:eta1}
\|\eta_1 \|_{H^1}^2 \leq C h^5 \|\psi\|_{H^2}^2 \,,
\end{equation}
\begin{equation}\label{eq:eta2a}
\|\eta_2^a \|_{H^1}^2 \leq C h^5 \|\psi\|_{H^1}^2 \,,
\end{equation}
and
\begin{equation}\label{eq:eta2b}
\|\eta_2^b \|_{H^1}^2 \leq C h^5 \|\psi\|_{H^1}^6 \,.
\end{equation}
This clearly implies Theorem \ref{scop}.

\bigskip

$\mathbf{\eta_1}:$ The square of the $H^1$ norm of $\eta_1$ is given by
$$
\|\eta_1\|_{H^1}^2 = h \sum_{p\in 2\pi \Z} |\hat \psi(p)|^2 J(hp)
$$
with
$$
J(p) =  \frac{\beta^4}4 \int_{\R} \left( (q+p)^2 - q^2\right)^2 \left( 1+
q^2\right) \left| F(q+p,q)- F(q,q+p)\right|^2\, \frac{dq}{2\pi}
$$
and 
$$
F(p,q)=\frac 1{p^2-\mu}\frac 1{1+e^{\beta (p^2-\mu)}}\frac
1{1+e^{\beta(q^2-\mu)}} 
\left(\frac{e^{\beta(p^2-\mu)}-e^{\beta(q^2-\mu)}}{p^2-q^2}
+ \frac{e^{\beta(p^2+q^2-2\mu)}-1}{p^2+q^2-2\mu}\right) \,.
$$
One can show that $0\leq J(p)\leq C p^4$ \cite{FHSS}, which yields the desired
bound \eqref{eq:eta1}.

\bigskip

$\mathbf{\eta_2^a}:$ This term is a sum of two terms and we begin by bounding
the first one, that is, $-h^3(2\pi i)^{-1} \int \rho(\beta z)
(z-k_0)^{-1} W (z-k)^{-1}\psi (z+ k)^{-1} \, dz$. Using H\"older's
inequality for the trace per unit volume we find that the square of the $H^1$
norm of the integrand can be bounded by
\begin{align*}
& \Tr\left[ \frac{1-h^2\nabla^2}{|z-k_0|^2} W
\frac{1}{z-k}\psi\frac{1}{|z+k|^2}\overline\psi\frac{1}{\bar z-k} W \right] \\
& \quad \leq \left\| \frac{1-h^2\nabla^2}{|z-k_0|^2} \right\|_\infty
\|W\|_{\infty}^2 \|\psi\|_\infty^2 \|(z-k)^{-1}\|_\infty^2 \|(z+k)^{-1}\|_2^2
\,.
\end{align*}
In order to bound this we use \eqref{pb} and \eqref{infb}, as well as the fact
that $\| (1-h^2\nabla^2) |z-k_0|^{-2}\|_\infty$ is bounded by $C|z|^{-1}$ if
$\re z\leq -1$ and by $C|z|$ if $\re z\geq 1$. (This follows similarly as
\eqref{infb}.)  In particular, we conclude that for $\re z\leq -1$ the previous
quantity is bounded by $C h^{-1} \|\psi\|_\infty^2 |z|^{-7/2}$. The square
root of this is integrable against $\rho(\beta z)$ and we arrive at the bound
$C h^{5/2} \|\psi\|_\infty$ for the $H^1$ norm. For the positive $z$
direction, we notice that $\rho(\beta z)$ decays exponentially leading to a
finite result after $z$ integration. 

For the second term in $\eta_2^a$ we proceed similarly. It is important to first
notice that $\rho(z) = 1-\rho(-z)$, however, and that the $1$ does not
contribute anything but integrates to zero. Proceeding as above we arrive at
\eqref{eq:eta2a}.

\bigskip

$\mathbf{\eta_2^b}:$ Finally, we consider $\eta_2^b$. Using H\"older's
inequality for the trace per unit volume and bounding
$[(z-H_\Delta)^{-1}]_{22}$ by $2\beta /\pi$ for $z\in\Gamma$ we find that the
square of the $H^1$ norm of the integrand is bounded by
\begin{align*}
& \frac{4\beta^2}{\pi^2} \|\psi\|_\infty^6 
\left\| \frac{1-h^2\nabla^2}{|z-k_0|^2} \right\|_\infty
\|(z-k)^{-1}\|_\infty^2 \|(z+k)^{-1}\|_2^2
\,.
\end{align*}
Similarly as in the bound for $\eta_2^a$ one can show that for $\re z\leq -1$
this is bounded by $C h^{-1} \|\psi\|_\infty^6 |z|^{-7/2}$. This leads to
\eqref{eq:eta2b}.

\bigskip

\noindent {\it Acknowledgments.} Part of this work was carried out at
the Erwin Schr\"odinger Institute for Mathematical Physics in Vienna,
Austria, and the authors are grateful for the support and hospitality
during their visit. Financial support via U.S. NSF grants DMS-0800906 (C.H.)
and PHY-0845292 (R.S.) and a grant from the Danish council for
independent research (J.P.S.) is gratefully acknowledged.


\end{document}